\title{Interactive Communication in Bilateral Trade}
\author{Jieming Mao \\ Google Research \and Renato Paes Leme \\ Google Research \and Kangning Wang \\ Duke University}
\date{}
\newcommand{\D}{\mathcal{D}}
\newcommand{\MC}{\mathbf{C}}
\newcommand{\VR}{\mathbf{VR}}
\newcommand{\E}{\mathbb{E}}
\newcommand{\R}{\mathbb{R}}
\newcommand{\X}{\mathcal{X}}
\newcommand{\one}{\mathbf{1}}
\newcommand{\arw}{{Latex[length=1.5mm,width=2mm]}}
\newcommand\abs[1]{\vert{#1}\vert}
\newtheorem{theorem}{Theorem}[section]
\newtheorem{lemma}[theorem]{Lemma}
\theoremstyle{definition}
\newtheorem{example}{Example}
\begin{document}

\maketitle

\begin{abstract}
We define a model of interactive communication where two agents with private types can exchange information before a game is played. The model contains Bayesian persuasion as a special case of a one-round communication protocol. We define message complexity corresponding to the minimum number of interactive rounds necessary to achieve the best possible outcome. Our main result is that for bilateral trade, agents don't stop talking until they reach an efficient outcome: Either agents achieve an efficient allocation in finitely many rounds of communication; or the optimal communication protocol has infinite number of rounds. We show an important class of bilateral trade settings where efficient allocation is achievable with a small number of rounds of communication.
\end{abstract}

\def\arraystretch{1.5}

% Other possible titles are: ``Interactive Communication in Information Design'', "Interactive Information Design'', "Efficient Bilateral Trade with Interactive Communication'', ...

\section{Introduction}

We consider the situation where agents are allowed to have a conversation before playing a game. Unlike cheap talk \cite{aumann2003long}, we assume agents are able to send verifiable signals. The ability to send verifiable signals is a new device introduced in the literature on Bayesian persuasion \cite{kamenica2011bayesian}. Without this power, various revelation-principle-type results show that anything that can be achieved through an interactive game/mechanism can also be achieved through direct revelation. In stark contrast to those, we will show that with verifiable signals, a (potentially very long) interactive protocol can lead to more efficient outcomes.

Consider the following scenario: Sally is a supplier of electronic components and Bob is a builder of computers. They have an ongoing business relationship and often negotiate the price of new custom parts. Since they come to the negotiation table over and over, they established a protocol which they follow during the negotiation. Through this protocol, Bob will convey information about his value for the new component and Sally will convey information about her cost for producing it. Each step of the communication is verifiable: Bob can for example show Sally a quote from a competing supplier, bounding his willingness to pay. Sally can tell Bob the cost of raw materials for building that component, which will refine Bob's information about Sally's production cost. Their goal is that if they talk long enough, they will be able to settle on a price in-between or figure out that no trade is possible (Sally's cost exceeds Bob's value).

Informally, we will show that the optimal communication protocol will  only end when agents reach an efficient outcome. To formalize this statement, we define a formal model of interactive communication, which contains the Bayesian persuasion model of  Kamenica and Gentzkow \cite{kamenica2011bayesian} and Dughmi and Xu \cite{dughmi2016persuasion} as a special case of a one-round protocol. We define our protocol for a generic two-player game and study bilateral trade as a special case. Although the model can be generalized to multiple players, most applications we have in mind are about two agents talking to each other: buyer and seller in bilateral trade, sender and receiver in Bayesian persuasion or Alice and Bob in standard communication complexity.

\paragraph{Model of Interactive Communication} In this paper we study this question by first defining a communication protocol that takes place before Sally and Bob play a game where both players have private information and their payoffs depend on the combination of their private types. To make this concrete, we will use \emph{bilateral trade} as the main running example. Sally (the seller) has a private cost $\theta_S$ on the item being sold and Bob (the buyer) has a private value $\theta_B$. Sally's action in the game is to set a price $p$. If the price is below Bob's value the trade happens and Sally and Bob get utilities $p-\theta_S$ and $\theta_B - p$ respectively. Otherwise trade doesn't happen and both get zero utility.

Before Sally decides on the price, Bob may want to send a signal refining Sally's information in an attempt to persuade her to set a lower price. Since this is a game where both sides have private information, Bob's signal will depend on his information about Sally. Before Bob sends a signal, Sally may want to refine Bob's information about her in an attempt to persuade him to send a better signal.
This can be taken one step further. Before Sally sends her signal, Bob can send a preliminary signal to influence how Sally will signal to influence Bob's subsequent signal.

The result can be interpreted as a communication protocol which consists of an alternating sequence of information refinements. In odd rounds, Bob sends a signal refining Sally's information and in even rounds Sally sends a signal refining Bob's information. We assume the same commitment structure as in Kamenica and Gentzkow \cite{kamenica2011bayesian}: agents truthfully communicate the realization of signals. This leads to a natural equilibrium notion which contains the persuasion scheme of \cite{kamenica2011bayesian} as a special case when the round complexity of the protocol is one.

To study how the round complexity of the protocol affects the efficiency of the game, we define the notion of \emph{message complexity}. A game has message complexity $t$ if players can't improve their payoffs after $t$ rounds of communication. In other words, for any $t' \geq t$ the payoffs the agents get with a $t'$-round protocol are the same as with a $t$-round protocol.

\paragraph{Efficient Communication in Bilateral Trade} Our main result is that for bilateral trade, if the message complexity is finite, then the allocation is efficient. This means that for any initial set of types and distributions in a bilateral trade game, one of the following must be true: either agents exchange information in a way that after finitely many rounds they can implement the efficient outcome (which is to trade whenever $\theta_B > \theta_S$); or the efficiency of the allocation keeps improving the longer they talk.

For the case where the buyer has only two types in the support of his distribution, efficient allocation can always be achieved with two rounds of communication (regardless of how complicated the seller's type space is). If the buyer has three types in the support, we have an example showing we need at least three rounds of communication. Finally, we conjecture that whenever type spaces are finite we can always achieve efficient allocation in finitely many rounds of communication.

\paragraph{A game with longer communication} We end the paper with an example of a game with longer communication. In the tradition of Cold War game theory we now have Sally (the Spymaster) and Bob (the Birdwatcher) be spies whose private types correspond to the countries they are actually serving. Sally and Bob don't know if they are friends or enemies. If they are friends (same type) their payoffs correspond to those of a cooperative game. If they are enemies (different types) their payoffs correspond to a zero-sum game. Bob and Sally can communicate before playing the game, but it is a delicate balance: both would like to know if their counterpart is their friend or enemy but both would also like to the other to believe they may be a friend when they are in fact enemies. We show that a long and gradual disclose of information can  benefit both parties.

\paragraph{Techniques} From a technical standpoint, we build on top of two techniques introduced by 
Aumann and Maschler  \cite{aumann1995repeated} that became standard in the toolbox of information design. The first is the observation that signals can be thought of as decomposition of a prior distribution into posteriors that are only required to average to the prior (``splitting lemma''). The second is the concavification of payoff functions, which is the observation that by signaling, the sender can replace their payoff function by its concave hull.

Our main technical innovation is to analyze the dynamic of \emph{alternate concavification}, by which we mean the following: consider we start from payoff functions $\pi_B(\D_S, \D_B)$ and $\pi_S(\D_S, \D_B)$ that are defined as functions of Sally's information about Bob ($\D_B$) and Bob's information about Sally ($\D_S$). A message by Sally allows her to replace her payoff by its concave hull with respect to $\D_S$, smoothening out Bob's payoff correspondingly. Bob can similarly replace his payoff by its concave hull on $\D_B$. Note however, that whenever Sally concavifies her payoff it may cause Bob's payoff to be no longer concave and vice-versa. Our analysis will deal with understanding whether this procedure stabilizes after finitely many rounds (finite message complexity) or whether it goes on indefinitely. If it stabilizes, we are interested in understanding its properties.

In our analysis of bilateral trade, we will develop a higher-order version of the indifference argument of Bergemann, Brooks and Morris \cite{bergemann2015limits}.
While \cite{bergemann2015limits} decomposes a prior distribution into posteriors for which the seller is indifferent about which price to set, we will decompose each prior into posteriors for which the buyer is indifferent between different ways to make the seller indifferent.

% an important tool will be the characterization of Bergemann, Brooks and Morris \cite{bergemann2015limits} of the set of achievable payoffs for a buyer and a seller under different forms of market segmentation. Their notion of market segmentation will correspond to our one-round communication protocol.

\paragraph{Implementation}  In an online supplement\footnote{See code in Github (\url{https://gist.github.com/renatoppl/72ee85d212d08ad9670977cc8ffa2afa}) or Google Colab (\url{https://colab.research.google.com/drive/1lQdRrZD7-aCuuYJM6bI1sf1fi_CSA4Fq?usp=sharing}).} we provide an implementation of the alternating concavification procedure which takes a specification of a generic game with binary type spaces as an input and computes the payoffs after $t$ rounds of communication together with the communication protocol. The computation is exact: it uses rational numbers (so there is no floating point precision issues) and computes a parametric concave hull so it doesn't need to rely on discretization.
%\jieming{do we want to claim we solve the algorithmic question of computing the equilibrium for binary types? maybe it's too weak to mention. I assume the running time is exponential in $t$?}\renato{yeah, it is not clear it is exponential (I suspect it is). We would need to bound the number of breakpoints in the code.} \jieming{I see, then maybe we don't need to highlight this as an algorithmic contribution.}

\paragraph{Related Work} Our work is broadly situated in line of work in Economics studying how the information structure affects the outcome in auction and bargaining settings, which was initiated in Bergemann and Pesendorfer \cite{bergemann2007information} and Es{\H{o}} and Szentes \cite{esHo2007optimal} and has been more recently explored in Bergemann et al \cite{bergemann2007information,bergemann2017first}, Emek et al \cite{emek2014signaling}, Daskalakis et al \cite{daskalakis2016does} and Badanidiyuru et al \cite{badanidiyuru2018targeting}. In this line of work, the auction designer is more informed than the participants and must decide how much information to disclose as part of the design decision. We differ from this line of work in the sense that we assume that the information lies with the participants themselves and their decisions on how to disclose information affect their payoffs.

In that sense, our work is closer to the Bayesian persuasion model of Kamenica and Gentzkow~\cite{kamenica2011bayesian,gentzkow2014costly} and the algorithmic persuasion of Dughmi and Xu \cite{dughmi2019algorithmic} and Dughmi et al \cite{dughmi2016persuasion}. In this setting there are two agents: \emph{sender} and \emph{receiver} where the sender is typically more informed than the receiver but the receiver is the one responsible for choosing an action in a base game. In contrast our agents both have uncertainty about each other. Kamenica and Gentzkow \cite{kamenica2011bayesian} briefly consider the setting where the receiver can also have private information, but only allow messages in one direction (from the sender to the receiver). Instead our paper considers both parties to have partial information and considers an interactive exchange of information. Doval and Ely \cite{doval2020sequential} consider sequential disclosure of information that may depend on player behaviors, but it again differs from our setting in the sense that the information is initially held by the designer who then discloses it to agents. We refer to the excellent survey by Bergemann and Morris \cite{bergemann2019information} for a unified treatment of those papers.

The notion of having a conversation before a game is played is the central talk in the \emph{cheap talk} literature. In particular in ``Long Cheap Talk'', Aumann and Hart \cite{aumann2003long} show how the set of equilibrium payoffs of a game can be expanded by an arbitrarily long conversation before the game is played. Their model assumes no commitment whatsoever (the players are free to send messages as they please) while in our model we have the same commitment structure as in the Bayesian persuasion literature, where the realization of signals are truthfully communicated.

The power of interactive communication has been extensively studied in communication complexity (Yao \cite{Yao79}). Nisan and Wigderson \cite{NisanW93} show an exponential gap between $k$ and $(k-1)$-round communication complexity. Babai et al \cite{BabaiGKL03} show an exponential gap between simultaneous communication complexity and communication complexity in the multi-party number-on-forehead communication model. Similar demonstrations of the power of interactive communication have been shown when studying the communication complexity of mechanism design problems. Dobzinski, Nisan and Oren~\cite{DobzinskiNO14} study how the number of adaptive rounds affects welfare efficiency in communication protocols with polynomial communication complexity for combinatorial auctions. Subsequent work by Alon et al \cite{AlonNRW15} and Assadi \cite{Assadi17} provides tight bounds on the number of rounds necessary to obtain an efficient allocation. The motivation for interactive communication in those papers is a restriction in the number of bits used in each interaction. Hence the need for interactivity comes from algorithmic and not strategic considerations. In our paper, on the other hand, interactivity is driven by strategic considerations: agents will only reveal so much about their types until they can learn more about the other agent's type.
% Gonczarowski  et al \cite{gonczarowski2014stable} ... 

Our results also contribute to the line of work on bilateral trade started by Chatterjee and and Samuelson \cite{chatterjee1983bargaining} and Myerson and Satterthwaite \cite{myerson1983efficient}. Their message is in a sense the opposite of ours: interaction is not helpful (in the sense of the revelation principle) and efficient trade can't be achieved by an incentive compatible mechanism. We show that communication is useful and leads to efficient allocation in many important cases. The main difference is that here we are giving the agents additional commitment power: they can credibly signal about their type, which is what drives the Bayesian persuasion literature. We refer the reader to Section 1C of Kamenica and Gentzkow \cite{kamenica2011bayesian} for an in-depth discussion of the source and motivation behind this additional commitment power.

There is an important line of work looking at the bilateral trade problem from the perspective of approximation algorithms. Blumrosen and Dobzinski \cite{blumrosen2014reallocation} give the first approximation to efficiency in bilateral trade, later improved by Collini-Baldeschi et al \cite{colini2016approximately} and Kang and Vondrak~\cite{kang2019fixed}. Recently, various new angles have been explored, such as multi-dimensional two-sided markets (Collini-Baldeschi et al \cite{colini2020approximately}, Cai et al \cite{cai2021multi}), gains from trade approximation (Brustle et al \cite{brustle2017approximating}%,Blumrosen and Mizrahi \cite{blumrosen2016approximating}
), best of both worlds guarantees (Babaioff et al \cite{babaioff2018best}), resource augmentation (Babaioff et al \cite{babaioff2020bulow}), dynamic auctions (Balseiro et al \cite{balseiro2019dynamic}), and limited information (D{\"u}tting et al \cite{dutting2020efficient}).

%\redit{There is an important line of work looking at the bilateral trade problem from the perspective of approximation algorithms. Blumrosen and Dobzinski \cite{blumrosen2014reallocation} give the first approximation to efficiency in bilateral trade, later improved by Collini-Baldeschi et al \cite{colini2016approximately} and Kang and Vondrak \cite{kang2019fixed}. Recently, various new angles have been explored such as combinatorial two-sided markets (Collini-Baldeschi et al \cite{colini2020approximately}), gains from trade approximation (Brustle et al \cite{brustle2017approximating}), best of both worlds guarantees (Babaioff et al \cite{babaioff2018best}), resource augmentation (Babaioff et al \cite{babaioff2020bulow}), dynamic auctions (Balseiro et al \cite{balseiro2019dynamic}), limited information (D{\"u}tting et al \cite{dutting2020efficient}), ... }\kangning{Cai et al \cite{cai2021multi}?}

\section{Games with Interactive Communication}\label{sec:communication}

Our main objects of study are two-player games where both players (called Sally and Bob) have private types. Ex-ante (before types are revealed) the players can agree on a communication protocol to exchange information about their types. We will be interested in how the round complexity of the communication protocol can affect the outcome of the game. For simplicity we will restrict to games where only Sally has non-trivial actions.

\subsection{Base game} We first define the \emph{base game}, which is played after the communication protocol. In this game Bob has a type $\theta_B \in \Theta_B$ and Sally has a type $\theta_S \in \Theta_S$. The types are drawn from independent known distributions $\D_B$ and $\D_S$ respectively. Only Sally has an action $a_S \in A_S$ to choose. The utilities of both players are given by functions:
$$u_i : A_S \times \Theta_S \times \Theta_B \rightarrow \R \quad i \in \{S, B\}$$
As usual in games with private information, Sally knows her type but only knows the distribution over which Bob's type is drawn and vice versa. Her optimal strategy is rather simple:
$$a_S^*(\theta_S;\D_B) = \text{argmax}_{a \in A_S} \E_{\theta_B \sim \D_B}[u_S(a,\theta_S, \theta_B)]$$

\subsection{Example: Bilateral Trade} \label{ex:bilateral_trade}
Bilateral trade will provide us the main running example. Sally the seller is trying to sell an item to Bob the buyer. The types of Sally and Bob are their values for the item. Sally's action is to choose a price to sell the good. The sets $\Theta_S$, $\Theta_B$ and $A_S$ correspond to the non-negative real numbers $\R_+$. Upon setting a price $a_S$, Bob will buy whenever $\theta_B \geq a_S$ leading to the following utilities:
$$u_B(a_S, \theta_S, \theta_B) = (\theta_B - a_S) \cdot \one\{ \theta_B \geq a_S\} \qquad u_S(a_S, \theta_S, \theta_B) = (a_S - \theta_S) \cdot \one\{ \theta_B \geq a_S\}$$
Without any communication, the expected welfare is:
$$W (\D_S, \D_B) = \E_{\theta_S \sim \D_S} \E_{\theta_B \sim \D_B} [u_S(a_S^*(\theta_S; \D_B), \theta_S, \theta_B) + u_B(a_S^*(\theta_S; \D_B), \theta_S, \theta_B)]$$
which is typically suboptimal when compared to the the welfare under efficient trade:
\begin{equation}\label{eq:eff_welfare}
W ^*(\D_S, \D_B) = \E_{\theta_S \sim \D_S} \E_{\theta_B \sim \D_B} [(\theta_B - \theta_S)^+]
\end{equation}

If Bob doesn't buy, both players experience zero utility. So it is in the best interest of both Sally and Bob to make the trade happen whenever $\theta_B > \theta_S$. It is only natural for them to engage in \emph{bargaining}, i.e., a conversation in which information is gradually revealed in a way that the price settles in at a point that hopefully leads to efficient trade. By gradual information revealing we mean gradually refining each other's priors to lead to an outcome that is improving for both parties. 

\subsection{Communication Protocol}

A communication protocol will be an alternating sequence of information refinements: in odd rounds Bob will send a signal to refine Sally's information about his type and in even rounds Sally will refine Bob's information about her type.

\subsubsection{Notation}
The protocol will be described as a sequence of messages by each player. We will adopt the unusual but convenient convention of using $t$ to denote the $t$-to-last message. So $m^1$ will denote the last message, $m^2$ the penultimate message and so on$\ldots$ As we described before, odd messages will be sent by Bob and denoted $m^t_B$. Even messages will be sent by Sally and denoted $m^t_S$.

To describe the messages, we need an additional notation. Given any set $\X$, let $\Delta(\X)$ be the set of distributions over $\X$, and let $\Delta(\Delta(\X))$ be the set of distributions over $\Delta(\X)$. %\kangning{This sentence was awkward to read foe me...}, i.e., the set of distribution over distributions over $\X$.
We will also define the following operator:
$$\mu : \Delta(\Delta(\X)) \rightarrow \Delta(\X)$$
that given a distribution over distributions returns a single distribution in the natural way: given $M \in \Delta(\Delta(\X))$ build a distribution over $\X$ by first sampling $\D \sim M$ where $\D \in \Delta(\X)$ and then sampling an element of $\X$ from $\D$. The resulting distribution over $\X$ is $\mu(M)$.

\subsubsection{Information Refinement}\label{subsec:refinement}

We say that a distribution over distributions $M \in \Delta(\Delta(\X))$ is an information refinement of a distribution $\D \in \Delta(\X)$ whenever $\mu(M) = \D$.

An information refinement is a convenient way to represent a signal. A signal about $\D$ is a random variable $Y$ that is correlated with a random variable $X \sim \D$. Upon observing $Y$ one can perform a Bayesian update and obtain the distribution of $X$ conditioned on $Y$. This induces a distribution over distributions $M \in \Delta(\Delta(\X))$ s.t. $\mu(M)=\D$. Conversely given any $M \in \Delta(\Delta(\X))$ such that $\mu(M)=\D$ we can obtain a signal by letting $Y$ represent the distribution sampled from $M$ and $X$ be an element in $\X$ sampled from $Y$ (here $Y$ is both a random variable and a distribution in $\Delta(\X)$).

\subsubsection{Structure of the messages}
\label{subsubsec:msg_struct}
Using the notation in the previous subsections we define messages in odd rounds as:
$$m^t_B: \Delta(\Theta_S) \times \Delta(\Theta_B) \rightarrow  \Delta(\Delta(\Theta_B)) \quad \text{s.t.} \quad \mu(m^t_B(\D_S, \D_B)) = \D_B$$
A message will take as input the information Sally has about Bob and the information Bob has about Sally and output a refined version of Bob's information. Similarly in even rounds:
$$m^t_S: \Delta(\Theta_S) \times \Delta(\Theta_B) \rightarrow  \Delta(\Delta(\Theta_S)) \quad \text{s.t.} \quad \mu(m^t_S(\D_S, \D_B)) = \D_S$$
A sequence of functions $m_B^1, m_S^2, m_B^3, \hdots, m_i^k$ (where $i \in \{B,S\}$ depends on the parity of $k$) describes an \emph{alternating information refinement} protocol.

The protocol is executed then as follows: if the types are sampled from $\D_B$ and $\D_S$ respectively, we set $\D_B^k = \D_B$ and $\D_S^k = \D_S$. Then for $t=k,k-1,k-2, \hdots, 1$,
$$\begin{aligned}
& \text{if $t$ is odd:} & & \D_B^{t-1} \sim m_B^t(\D_S^t, \D_B^t) & \quad & \D_S^{t-1} = \D_S^{t}\\
& \text{if $t$ is even:} & & 
\D_B^{t-1} = \D_B^t  & & \D_S^{t-1} \sim m_S^t(\D_S^t, \D_B^t)
\end{aligned}$$
Once the communication is over, Sally selects her optimal action $a_S^*(\theta_S; \D_B^0)$ using the information available at that point.

\subsubsection{Payoffs}
%  \renato{I think we can define like that: $\pi_i^t(\D_S, \D_B) = \E[ \pi_i^0(\D_S^{0}, \D_B^{0})) \vert \D_S^t = \D_S, \D_B^t = \D_B]$. Say $t=1$, $\pi_i^1(\D_S^1, \D_B^1) = \E[ \pi_i^0(\D_S^{0}, \D_B^{0})) \vert \D_S^1, \D_B^1]$ }

We can now define recursively the payoffs of each agent. Let $\pi_B^t(\D_S, \D_B)$ and $\pi_S^t(\D_S, \D_B)$ be the expected payoffs obtained by Bob and Sally from a $t$-round protocol with initial information $\D_S$ and $\D_B$. For $t=0$ we simply have the payoffs of the base game:
\begin{equation}\label{eq:pi0_def}
\pi^0_i(\D_S, \D_B) = \E_{\theta_S \sim \D_S, \theta_B \sim \D_B} [u_i(a_S^*(\theta_S; \D_B), \theta_S, \theta_B)] \quad i \in \{B, S\}
\end{equation}
For $t > 0$ and we have:
$$\begin{aligned}
&
\pi^t_i(\D_S, \D_B) = \E_{\D'_B \sim m_B^t(\D_S, \D_B)} [\pi_i^{t-1}(\D_S, \D'_B)] \quad & & i \in \{B, S\} \quad & & \text{for odd } t \\
& \pi^t_i(\D_S, \D_B) = \E_{\D'_S \sim m_S^t(\D_S, \D_B)} [\pi_i^{t-1}(\D'_S, \D_B)] \quad & & i \in \{B, S\} \quad & & \text{for even } t \\
\end{aligned}$$

\subsubsection{Solution Concept}

We now define the notion of equilibrium of a communication protocol. In high level terms a protocol in equilibrium must satisfy two properties: (i) \emph{voluntary communication} and (ii) \emph{sub-protocol optimality}.

\paragraph{Voluntary Communication}
We say that a protocol satisfies \emph{voluntary communication} if both agents weakly prefer communicating over skipping that round. Mathematically this can be stated as follows:
\begin{equation}\label{eq:voluntary_refinement}
\pi^t_i(\D_S, \D_B) \geq \pi^{t-1}_i(\D_S, \D_B), \forall i \in \{B, S\} \text{ and } \forall t > 0
\end{equation}
This means that the sender will never send a message decreasing their payoff. The receiver will refuse\footnote{The important assumption here is that the receiver has a way to credibly ``not listen''. For example, whenever Bob tells Sally something, even if she decides not to use the information in subsequent rounds, Bob knows that she knows and that may prevent Bob from disclosing further information in the future. Sally must have a way to show Bob that she didn't update her prior based on that message.} to hear any message that decreases their payoff (by for example shutting their ears or deleting an email without reading).

We define the set of \emph{voluntary refinements} as follows. For odd $t$ define $\VR_B^t(\D_S, \D_B)$ as the set of refinements $M\in \Delta(\Delta(\Theta_B))$ with $\mu(M) = \D_B$ such that 
$$\E_{\D'_B \sim M} [\pi_i^{t-1}(\D_S, \D'_B)] \geq \pi_i^{t-1}(\D_S, \D_B), \qquad  \forall i \in \{B,S\} $$
Similarly in even rounds we define $\VR_S^t(\D_S, \D_B)$ as the set of refinements $M\in \Delta(\Delta(\Theta_S))$ with $\mu(M) = \D_S$ such that 
$$\E_{\D'_S \sim M} [\pi_i^{t-1}(\D'_S, \D_B)] \geq \pi_i^{t-1}(\D_S, \D_B), \qquad  \forall i \in \{B,S\} $$

\paragraph{Sub-protocol optimality} The second condition for equilibrium is that each message maximizes the payoff of the sender among all messages satisfying voluntary communication:

\begin{equation}\label{eq:equilibrium}
\begin{aligned}
&
\pi_B^t(\D_S, \D_B) = \max_{M\in \VR_B^t(\D_S, \D_B)} \E_{\D'_B \sim M} [\pi_B^{t-1}(\D_S, \D'_B)] \quad & &  \text{for odd } t \\
& \pi_S^t(\D_S, \D_B) = \max_{M\in \VR_S^t(\D_S, \D_B)} \E_{\D'_S \sim M} [\pi_S^{t-1}(\D'_S, \D_B)] \quad  & & \text{for even } t \\
\end{aligned}
\end{equation}

\paragraph{Equilibrium selection} The equilibrium conditions \eqref{eq:voluntary_refinement} and \eqref{eq:equilibrium} don't specify an unique protocol, since there may be multiple optimal information refinements for Sally leading to different utilities for Bob and vice-versa. For the remainder of the paper we will study the equilibrium in which each agent breaks ties in favor of the other one.\footnote{ Formally, this means that in odd rounds $t$ Bob will choose a refinement $M \in \VR_B^t(\D_S, \D_B)$ that lexicographically maximizes $(\E_{\D'_B \sim M} [\pi_B^{t-1}(\D_S, \D'_B)], \E_{\D'_B \sim M} [\pi_S^{t-1}(\D_S, \D'_B)])$ and Sally will choose a refinement $M \in \VR_S^t(\D_S, \D_B)$ that lexicographically maximizes $(\E_{\D'_S \sim M} [\pi_S^{t-1}(\D'_S, \D_B)], \E_{\D'_S \sim M} [\pi_B^{t-1}(\D'_S, \D_B)])$.}

With this tie-breaking in place, the values of $\pi^t_i(\D_S, \D_B)$ in equilibrium are uniquely determined. From now on, whenever we refer to  $\pi^t_i(\D_S, \D_B)$ we will be referring to the values in equilibrium with that tie-breaking rule.

\paragraph{Note on Voluntary Communication} In settings such as binary type spaces ($\abs{\Theta_S} = \abs{\Theta_B} = 2$) any refinement is voluntary. This is due to the fact that uncertainty can be represented by a single-parameter. We show this in Appendix \ref{sec:voluntary_discussion} together with a discussion on Voluntary Communication.

\subsubsection{Generic protocols}

One could consider more general protocols, for example by allowing both agents to simultaneously send messages and by considering messages in a generic space instead of just information refinements. Later in Appendix \ref{sec:reductions} we will show any general protocol can be reduced to an alternating information refinement protocol. We give the intuition below.

Information refinement is enough since the only use of messages is for agents to perform a Bayesian update on their information about the other agent. Hence it is enough to reason about the information updates directly.

To see that alternating communication is without loss of generality, observe that only Sally has an action at $t=0$. Hence there is no need for Sally to communicate at $t=1$ since Bob has no subsequent action to take. Sally may very well stay silent at time $t=1$. If that happens there is no need for Bob to say anything at $t=2$ since he can't influence any further message from Sally. Instead Bob can combine his communication at $t=2$ and $t=1$ and send it together at time $t=1$ staying silent at $t=2$. The same argument can be applied recursively showing that we can obtain a protocol with the same effect by having Bob and Sally speaking in alternating rounds.

\subsection{Message Complexity}\label{sec:message_complexity}

By voluntary communication, the utilities of the agents are monotone along the protocol, i.e., $\pi_i^t(\D_S, \D_B) \geq \pi_i^{t-1}(\D_S, \D_B)$ for all $i \in \{B,S\}$ and $t > 0$. This allows us to define \emph{limit utilities} as follows:
$$\pi_i^\infty(\D_S, \D_B) = \lim_{t \to \infty} \pi^t_i(\D_S, \D_B).$$
With that definition, we can define the \emph{message complexity} of a pair of distributions $(\D_S, \D_B)$ as:
$$\MC(\D_S, \D_B) := 
\min \left\{t \in \mathbb{N} \cup \{\infty\} \ \middle| \ \pi^t_S(\D_S, \D_B) + \pi^t_B(\D_S, \D_B) = \pi^\infty_S(\D_S, \D_B) + \pi^\infty_B(\D_S, \D_B)\right\}.$$
which corresponds to the minimum number of messages to achieve the best possible utilities.

Finally, if in equilibrium there is a message that moves from state $(\D_S, \D_B)$ in period $t$ to $(\D'_S, \D'_B)$ in period $t-1$ we say that $(\D'_S, \D'_B)$ is a \emph{child} of $(\D_S, \D_B)$ in period $t$.

\section{Efficient Bilateral Trade} \label{sec:bilateral_trade}

Bilateral trade will provide us with a concrete setting in which multiple rounds of interaction can lead to a better outcome -- in particular the efficient allocation. We start with a numerical example.

\subsection{Numerical example with one seller type}\label{sec:one_sided_uncertainty}

Recall that we are in the setting described in Section \ref{ex:bilateral_trade}. Consider the setting where Sally has a single type $\theta_s = 2$ and Bob has two possible types $3$ and $6$ with probability $p=1/3$ of having the high type. In the absence of any communication Sally will set a the price equal to $6$ and sell with probability $1/3$, resulting in inefficient trade: % Leading to the following outcome:
$$\pi^0_S = \frac{4}{3} \qquad \pi^0_B = 0 \qquad W^0 = \frac{4}{3} < W^* = 2$$
If we allow one round of communication, Bob can send a signal refining Sally's information about his type.\footnote{As we discussed in Section \ref{subsec:refinement} information refinements can always be interpreted as a Bayesian update of a signal correlated with the sender's type. In this particular case, Bob can accomplish this information refinement by sending one of two signals $\{L, H\}$. If Bob has the high type he can send the signal $H$ with probability $1/3$ and $L$ with probability $2/3$, and whenever he has the low type he always sends $L$. One can readily verify that this leads to the refinement described.} With probability $8/9$ Bob will send a message that refines Sally's information to $p=1/4$, where $p$ is the probability of having the high type. With probability $1/9$ Bob can refine Sally's information to $p=1$. In the first case, Sally is indifferent between the two prices and may very well price at $3$. In the second case, Sally will price at $6$. In either case, she will sell with probability one, leading to efficient trade. The outcome is the following:
$$\pi^1_S = \frac{4}{3} \qquad \pi^1_B = \frac{2}{3} \qquad W^1 = 2 = W^* = 2$$
This is in fact a general phenomenon described in Bergemann, Brooks and Morris \cite{bergemann2015limits}, who show that if there is no uncertainty about the seller's type, the buyer can always signal in order to extract the full surplus of the trade. The buyer does so by refining his distribution into revenue-equivalent distributions, i.e., distributions where the seller is indifferent between pricing at any point in the support. Below we state their result in our language, which will prove useful later.

\begin{lemma}[Bergemann-Brooks-Morris \cite{bergemann2015limits}]\label{lemma:bbm}
If $\D_S$ has a single point in the support, then $$\pi_B^1(\D_S, \D_B) = W^*(\D_S, \D_B) - \pi_B^0(\D_S, \D_B) \qquad \pi_S^1(\D_S, \D_B) =  \pi_S^0(\D_S, \D_B).$$
\end{lemma}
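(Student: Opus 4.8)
The plan is to bracket the buyer's optimal one‑round payoff from above by an easy welfare argument and from below by the equalizing segmentation of Bergemann, Brooks and Morris, after which Sally's identity drops out for free. Write $R^{*}:=\pi_S^{0}(\D_S,\D_B)$ for Sally's uniform‑price monopoly revenue against $\D_B$ given her single known cost $c$; for a function $f$ and a refinement $M$ of $\D_B$ write $\E_M[f]:=\E_{\D_B'\sim M}[f(\D_S,\D_B')]$; and recall $\pi_B^{1}(\D_S,\D_B)=\max_{M\in\VR_B^{1}(\D_S,\D_B)}\E_M[\pi_B^{0}]$. (For consistency with the numerical example the first displayed identity should be read with $\pi_S^{0}$, not $\pi_B^{0}$, on the right‑hand side, i.e.\ $\pi_B^{1}=W^{*}-R^{*}$; that is the form I will prove.) For the upper bound I would use three facts: $W^{*}(\D_S,\D_B)=\E_{\theta_S\sim\D_S}\E_{\theta_B\sim\D_B}[(\theta_B-\theta_S)^{+}]$ is linear in $\D_B$, so $\E_M[W^{*}]=W^{*}(\D_S,\D_B)$ since $\mu(M)=\D_B$; $\pi_B^{0}+\pi_S^{0}=W$ by \eqref{eq:pi0_def}; and $W(\D_S,\D_B')\le W^{*}(\D_S,\D_B')$ pointwise since efficient trade upper‑bounds realized welfare. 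Hence for every refinement $M$ of $\D_B$,
\begin{equation*}
\E_M[\pi_B^{0}]=\E_M[W]-\E_M[\pi_S^{0}]\le W^{*}(\D_S,\D_B)-\E_M[\pi_S^{0}].
\end{equation*}
When $M\in\VR_B^{1}$, Sally's voluntary‑communication inequality gives $\E_M[\pi_S^{0}]\ge R^{*}$, so the right‑hand side is at most $W^{*}(\D_S,\D_B)-R^{*}$; maximizing over $M$ yields $\pi_B^{1}(\D_S,\D_B)\le W^{*}(\D_S,\D_B)-R^{*}$.

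For the matching lower bound I would exploit that, because $\D_S$ is a point mass, Sally's post‑message problem is exactly monopoly pricing with fixed cost $c$, so the BBM equalizing segmentation applies: it produces a refinement $M^{*}$ of $\D_B$ with two properties. (i) In every posterior $\D_B'$ in the support of $M^{*}$ Sally is indifferent among all prices in $\mathrm{supp}(\D_B')$, so by the tie‑breaking convention she prices at $\min\mathrm{supp}(\D_B')$, sells to everyone, and trades efficiently within that posterior. (ii) Sally's total revenue is unchanged, $\E_{M^{*}}[\pi_S^{0}]=R^{*}$; for finitely supported $\D_B$ this is because each greedily peeled segment retains the original monopoly price $p^{*}$ in its support, so pricing at $p^{*}$ in every segment (equivalent by (i) to pricing at the lowest value there) yields $(p^{*}-c)\,\P_{\D_B}[\theta_B\ge p^{*}]=R^{*}$. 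By (i), $\E_{M^{*}}[W]=W^{*}(\D_S,\D_B)$, hence $\E_{M^{*}}[\pi_B^{0}]=W^{*}(\D_S,\D_B)-R^{*}$; and since $R^{*}\ge 0$ and $\pi_B^{0}(\D_S,\D_B)\le W(\D_S,\D_B)-R^{*}\le W^{*}(\D_S,\D_B)-R^{*}$, the refinement $M^{*}$ is voluntary for both players, i.e.\ $M^{*}\in\VR_B^{1}(\D_S,\D_B)$. This meets the upper bound, so $\pi_B^{1}(\D_S,\D_B)=W^{*}(\D_S,\D_B)-R^{*}$.

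Sally's identity then requires no further work: any buyer‑optimal refinement $M$ satisfies $\E_M[\pi_S^{0}]=\E_M[W]-\E_M[\pi_B^{0}]\le W^{*}(\D_S,\D_B)-(W^{*}(\D_S,\D_B)-R^{*})=R^{*}$, while voluntariness forces $\E_M[\pi_S^{0}]\ge R^{*}$, so the refinement picked by the tie‑breaking rule leaves Sally at exactly $R^{*}=\pi_S^{0}(\D_S,\D_B)$. The only genuinely nontrivial ingredient is the existence of the equalizing segmentation $M^{*}$ together with its revenue‑preservation property (ii): in a fully self‑contained write‑up I would present the finite‑support greedy peeling in detail and verify (i)–(ii) by hand (as in the worked example of Section~\ref{sec:one_sided_uncertainty}), and for arbitrary $\D_B$ either cite \cite{bergemann2015limits} directly or approximate $\D_B$ by finitely supported distributions and pass to the limit, using continuity of $W^{*}$ and of monopoly revenue in $\D_B$.
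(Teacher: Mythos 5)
The paper does not prove this lemma: it is stated as a translation of Bergemann, Brooks and Morris \cite{bergemann2015limits} into the paper's notation and used as a black box in the proof of Lemma \ref{lem:zero_complexity}, so your write-up supplies an argument that the paper leaves to the reference. Your reading of the statement is right: with $\pi_B^0$ on the right-hand side the first identity already fails on the worked example of Section \ref{sec:one_sided_uncertainty} (there $\pi_B^1=2/3$ while $W^*-\pi_B^0=2$), whereas $W^*-\pi_S^0 = 2 - 4/3 = 2/3$ matches, so $\pi_B^0$ is evidently a typo for $\pi_S^0$ (equivalently $\pi_S^1$). Your bracketing argument is sound: the upper bound $\pi_B^1 \le W^* - \pi_S^0$ uses only $W\le W^*$ pointwise, linearity of $W^*$ in $\D_B$, and Sally's voluntary-communication inequality; the lower bound is exactly the BBM equalizing segmentation, which you correctly verify lies in $\VR_B^1$ because it preserves $\pi_S^0$ exactly and gives Bob $W^*-\pi_S^0\ge W-\pi_S^0=\pi_B^0$; and $\pi_S^1=\pi_S^0$ then follows by the same squeeze, with tie-breaking moot since every buyer-optimal refinement already pins Sally at $\pi_S^0$. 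One detail worth making explicit when you fill in the construction: $\D_B$ may place mass on types $\theta_B$ below Sally's cost $c$, and such a type cannot lie in an indifference posterior whose support also contains $p^*$ (Sally would make a strict loss at that price but a nonnegative profit at $p^*$), so the segmentation should be preceded by disclosing the event $\{\theta_B \ge c\}$ and applied only to the conditional distribution on that event; the complementary branch is already efficient (no trade) and contributes nothing to $W^*$ or to $R^*$.
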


\subsection{Numerical example with two seller types}

If Bob also has uncertainty about the Sally's type, then it is no longer possible to achieve efficiency by one round of signaling. Let's keep Bob's types as $\Theta_B = \{3,6\}$ with $1/3$ probability on the high type. But now Sally's type 
is in $\Theta_S = \{0,2\}$ with $1/2$ probability on the high type.

Without any communication, Sally sets price $6$ whenever her type is $2$ and $3$ whenever her cost is $0$, leading to the following outcome:
$$\pi^0_S = \frac{13}{6} \qquad \pi^0_B = \frac{1}{2} \qquad W^0 = \frac{8}{3}  < W^* = 3$$

To describe the outcome with communication it is useful to use the diagram in Figure \ref{fig:two_types_example}. Since we have two types for each agent, we can represent the distributions $(\D_B, \D_S)$ by a pair $(p,q) \in [0,1]$ where $p$ is the probability that Bob has the high type and $q$ is the probability that Sally has the high type. Hence the initial distribution corresponds to the point $(1/3,1/2)$.

\begin{figure}[h]
\centering
\begin{tikzpicture}[scale=4]
    \draw[dashed] (0,.5)--(1,.5);
    \draw[dashed] (1/3,0)--(1/3,1);
    \draw[dashed] (1/2,0)--(1/2,1);
    \draw[dashed] (1/4,0)--(1/4,1);
    \node at (-.07,.5) {$\frac{1}{2}$};
    \node at (1/3,-.08) {$\frac{1}{3}$};
    \node at (1/4,-.08) {$\frac{1}{4}$};
    \node at (1/2,-.08) {$\frac{1}{2}$};
    \node at (-.07,.95) {$q$};
    \node at (.95,-.08) {$p$}; 
  \draw (0,0) -- (1,0) -- (1,1) -- (0,1) -- cycle;
    \draw [-{Latex[length=1.5mm,width=2mm]}, line width=1pt, color=blue] (1/3,1/2) -- (1/4,1/2);
    \draw [-{Latex[length=1.5mm,width=2mm]}, line width=1pt, color=blue] (1/3,1/2) -- (1/2,1/2);
   \node[circle,fill,inner sep=1pt] at (1/3,1/2) {};
   \node[circle,fill,inner sep=1pt] at (1/4,1/2) {};
   \node[circle,fill,inner sep=1pt] at (1/2,1/2) {};

 \begin{scope}[xshift=1.5cm]
    \draw[dashed] (0,.5)--(1,.5);
    \draw[dashed] (0,2/3)--(1,2/3);
    \draw[dashed] (1/3,0)--(1/3,1);
    \draw[dashed] (1/2,0)--(1/2,1);
    \draw[dashed] (1/4,0)--(1/4,1);
    \node at (-.07,.5) {$\frac{1}{2}$};
    \node at (-.07,2/3) {$\frac{2}{3}$};
    \node at (1/3,-.08) {$\frac{1}{3}$};
    \node at (1/4,-.08) {$\frac{1}{4}$};
    \node at (1/2,-.08) {$\frac{1}{2}$};
    \node at (-.07,.95) {$q$};
    \node at (.95,-.08) {$p$}; 
  \draw (0,0) -- (1,0) -- (1,1) -- (0,1) -- cycle;
  \draw [-{Latex[length=1.5mm,width=2mm]}, line width=1pt, color=red] (1/3,1/2) -- (1/3,0);
  \draw [-{Latex[length=1.5mm,width=2mm]}, line width=1pt, color=red] (1/3,1/2) -- (1/3,2/3);
  \draw [-{Latex[length=1.5mm,width=2mm]}, line width=1pt, color=blue] (1/3,2/3) -- (1/4,2/3);
  \draw [-{Latex[length=1.5mm,width=2mm]}, line width=1pt, color=blue] (1/3,2/3) -- (1,2/3);
    % \draw [-{Latex[length=1.5mm,width=2mm]}, line width=1pt, color=blue] (1/3,1/2) -- (1/2,1/2);
   \node[circle,fill,inner sep=1pt] at (1/3,1/2) {};
   \node[circle,fill,inner sep=1pt] at (1/3,2/3) {};
   \node[circle,fill,inner sep=1pt] at (1/3,0) {};
   \node[circle,fill,inner sep=1pt] at (1/4,2/3) {};
   \node[circle,fill,inner sep=1pt] at (1,2/3) {};
 \end{scope}
\end{tikzpicture}
\caption{Each square corresponds to $[0,1]^2$. A point $(p,q)$ depicts a state where Bob has the high type with probability $p$ and Sally has the high type with probability $q$. The initial state is $(\frac{1}{3},\frac{1}{2})$. The arrows correspond to information refinements in the optimal $1$-round (left) and $2$-round (right) protocols. Red arrows correspond to Sally's refinement of Bob's information and blue arrows to Bob's refinement of Sally's information.}
\label{fig:two_types_example}
\end{figure}
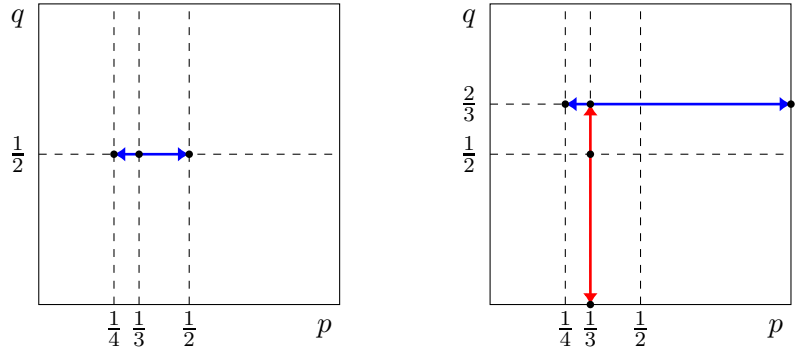

With one round of communication, the best thing that Bob can do is to refine\footnote{The reader is invited to check that this is a valid refinement (as defined in Section \ref{subsec:refinement}) since $\frac{1}{3}\cdot\frac{1}{2} + \frac{2}{3}\cdot\frac{1}{4} = \frac{1}{3}$.} Sally's information $p=1/2$ with probability $1/3$ and $p=1/4$ with probability $2/3$ (see the diagram on the left in Figure \ref{fig:two_types_example}). We will explain in the next subsection why this is the optimal choice for Bob. Now Sally's price depends both on her type and the information she has about Bob. Evaluating the four cases we get to the following outcome:
$$\pi^1_S = \frac{13}{6} \qquad \pi^1_B = \frac{3}{4} \qquad W^1 = \frac{35}{12} < W^* = 3$$

Bob's signal improves the efficiency of the allocation and extracts the additional efficiency as buyer surplus, but it is not quite enough to achieve full efficiency. With two rounds of communication, however, we obtain the efficient allocation. In the diagram on the right in Figure \ref{fig:two_types_example} we depict the optimal two-round protocol: first Sally sends a signal that refines Bob's signal to $q=0$ with probability $1/4$ and $q=2/3$ with probability $3/4$. If Bob's receives the $q=0$ signal, he stays silent since he knows Sally will already price at the low type. If Bob receives the $q=2/3$ signal, however, he refines Sally's information to $p=1/4$ with probability $8/9$ and $p=1$ with probability $1/9$. This leads to the following outcome:
$$\pi^2_S = \frac{9}{4} \qquad \pi^2_B = \frac{3}{4} \qquad W^2 = 3 =  W^* = 3$$
By revealing some information about her type, Sally incentivizes Bob to reveal more about his, leading to a distribution of information points in which efficient trade is possible. In the next section we show that this is a general phenomenon.

% \renato{TODO: add comparative statics showing that $\pi_i^t$ is monotone in $t$. This probably can be shown for a general game.}

\subsection{Nonstop Communication until Efficient Trade}\label{subsec:main_result}

Our main result is that agents don't stop talking until they reach an efficient outcome. Formally, we will show that if the agents have no further use for rounds of communication after a certain round (finite message complexity) then it must be because they have reached an efficient outcome.

\begin{theorem}[Main Theorem]\label{thm:main_trade_thm}
If the message complexity $\MC(\D_S, \D_B) = t < \infty$ for bilateral trade, then the equilibrium allocation is efficient, i.e., $$\pi^t_S(\D_S, \D_B) + \pi^t_B(\D_S, \D_B) = W^*(\D_S, \D_B)$$
where $W^*(\D_S, \D_B)$ is the welfare of the optimal allocation.
\end{theorem}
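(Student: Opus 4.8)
The plan is to argue by contradiction: assume $\MC(\D_S,\D_B)=t<\infty$ but $\pi^t_S(\D_S,\D_B)+\pi^t_B(\D_S,\D_B)<W^*(\D_S,\D_B)$, and exhibit a refinement that still strictly helps, contradicting finiteness of the message complexity. First I would reduce to a clean ``no improvement possible'' statement about the limit payoffs $\pi^\infty_i$. By voluntary communication each $\pi^{t'}_i$ is nondecreasing in $t'$, and $\MC=t$ forces $\pi^{t'}_S+\pi^{t'}_B$ to be constant for $t'\ge t$; since both summands are nondecreasing, each $\pi^{t'}_i$ is then individually constant, so $\pi^{t'}_i=\pi^\infty_i$ for all $t'\ge t$. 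Writing the equilibrium recursion \eqref{eq:equilibrium} at rounds $t{+}1$ and $t{+}2$ (one odd, one even) and using the maximality together with the lexicographic tie-break, one gets: for \emph{every} voluntary refinement $M$ of $\D_B$, and likewise for every voluntary refinement of $\D_S$, $\E_{M}[\pi^\infty_i]=\pi^\infty_i$ for both $i$. In particular \emph{no voluntary refinement of either marginal strictly increases the expected total welfare} $\E[\pi^\infty_S+\pi^\infty_B]$. Finally, since $\pi^0_S+\pi^0_B\le\pi^\infty_S+\pi^\infty_B\le W^*$ (the welfare of any realized allocation is at most $W^*$) and our assumption gives $\pi^\infty_S+\pi^\infty_B<W^*(\D_S,\D_B)$, the base-game allocation is already inefficient: there exist $c\in\mathrm{supp}(\D_S)$ and $v\in\mathrm{supp}(\D_B)$ with $c<v<a^*_S(c;\D_B)$, so the pair $(c,v)$ is a wasted trade.

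Everything now reduces to the following \textbf{improvement lemma}: \emph{if the base-game allocation at $(\D_S,\D_B)$ is inefficient, then at least one player has a voluntary refinement --- of $\D_B$ for Bob, of $\D_S$ for Sally --- under which $\E[\pi^\infty_S+\pi^\infty_B]$ strictly increases.} By the previous paragraph this contradicts the fixed-point property, which finishes the theorem. (The same lemma, applied repeatedly along the protocol, is the engine behind the informal claim that efficiency keeps improving the longer they talk when $\MC=\infty$, and it is consistent with the $3$-type example needing three rounds: the lemma only promises one strictly improving move, not a jump straight to $W^*$.)

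To prove the improvement lemma I would build on Lemma~\ref{lemma:bbm}. The base case is seller certainty: if $\D_S=\delta_c$, then $\pi^\infty_S+\pi^\infty_B\ge\pi^1_S+\pi^1_B=W^*(\delta_c,\D_B)$, so Bob's Bergemann--Brooks--Morris decomposition of $\D_B$ into revenue-equivalent posteriors is voluntary and strictly welfare-improving whenever the allocation is inefficient. For general $\D_S$ the idea is a ``gradual'' version of this. If Sally fully reveals her type, the child values are $W^*(\delta_c,\D_B)$, so full revelation always raises expected total welfare to $W^*(\D_S,\D_B)$; the only thing that can go wrong is voluntariness. So I would instead look for a \emph{partial} refinement: of $\D_S$ along a direction on which Sally's own continuation value $\pi^\infty_S(\cdot,\D_B)$ is affine (payoff-neutral for her, hence voluntary on her side as long as Bob is not hurt), landing in posteriors in which she is more concentrated on being a low-cost type, where --- by the base case applied to the more concentrated belief --- Bob's subsequent best response (already folded into $\pi^\infty$) prices more trades efficiently; dually, a refinement of $\D_B$ chosen so that Bob is indifferent between the various ways of making the seller indifferent, which is the higher-order indifference argument generalizing \cite{bergemann2015limits}, keeps $\pi^\infty_B$ fixed while still moving some seller types down to an efficient price. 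A case split on the structure of the wasted pair $(c,v)$ from Step~1 decides which of the two moves to use.

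The hard part will be exactly this construction: exhibiting a decomposition that is \emph{simultaneously} (a)~payoff-neutral for the mover, so the move is genuinely voluntary and not merely welfare-improving; (b)~non-decreasing for the other player even though that player's continuation value is the opaque limit object $\pi^\infty$ rather than a simple base payoff; and (c)~strictly welfare-improving because it recovers the wasted trade $(c,v)$ in at least one posterior. Efficiency gains generically redistribute surplus, so reconciling (a)--(c) is precisely where the BBM indifference idea has to be pushed to higher order, and it must be carried out while respecting the voluntary-communication constraint, which --- unlike in the binary-type case --- is not automatic once $\abs{\Theta_S}$ or $\abs{\Theta_B}$ exceeds two. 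A minor but real bookkeeping point is matching the side that holds the improving move to an odd or an even round; invoking both $t{+}1$ and $t{+}2$ in Step~1 is what makes the fixed-point obstruction parity-free and lets the improvement lemma pick whichever marginal it needs.
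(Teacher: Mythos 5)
Your Step~1 (the fixed-point reduction) is correct and essentially forced: with the lexicographic tie-break, $\MC(\D_S,\D_B)=t<\infty$ does imply that $\pi^{t'}_i$ is constant for $t'\ge t$, and that no voluntary refinement of either marginal can strictly raise $\E[\pi^\infty_S]$, $\E[\pi^\infty_B]$, or their sum. The reduction to your ``improvement lemma'' is therefore valid. The gap is that the improvement lemma itself is never proved --- you acknowledge it as ``the hard part'' and only sketch a construction --- and, as stated, it is substantially harder than what the theorem actually requires, because it quantifies over arbitrary $(\D_S,\D_B)$ with finite message complexity and asks for a refinement that is voluntary with respect to the opaque limit $\pi^\infty$. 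You notice yourself the crux: full revelation by Sally is voluntary when $\pi^0_S$ is in play (it is linear in $\D_S$, and BBM pins her payoff at the point masses), but for $t>0$ the function $\pi^t_S$ is a concavification, no longer linear in $\D_S$, so full revelation can strictly \emph{lower} Sally's continuation value and fail voluntariness. Your ``gradual / partial refinement along an affine direction'' and ``higher-order indifference'' sketches are not carried out and would need to confront exactly this obstruction.

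The paper's proof sidesteps the general improvement lemma by a minimal-counterexample induction on message complexity over \emph{all} distribution pairs. It proves two small lemmas: (i) if $\MC(\D_S,\D_B)=0$ and the allocation is inefficient, then Sally's full revelation is a voluntary period-$2$ message that strictly improves total welfare --- precisely the easy base case you identified, where linearity of $\pi^0_S$ in $\D_S$ plus Lemma~\ref{lemma:bbm} gives voluntariness for free; and (ii) if $\MC(\D_S,\D_B)=t\ge 1$, every equilibrium child in period $t$ has message complexity at most $t-1$ (this is a monotonicity/voluntary-communication argument, not a constructive refinement). Now pick an inefficient pair with \emph{minimal} message complexity $t$; by (i), $t\ge 1$; some child must still be inefficient (linearity of $W^*$); by (ii) that child has smaller message complexity, contradicting minimality. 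No improvement lemma for $\pi^\infty$ is ever needed. So the right fix for your proposal is not to push the higher-order indifference construction through, but to replace your ``improvement lemma at $(\D_S,\D_B)$'' with the minimality trick, proving instead the analogue of (ii) --- that children of a state with finite complexity have strictly smaller complexity --- which lets you invoke only the $\MC=0$ improvement you already have.

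One further small imprecision: in your Step~1 you write that for every voluntary refinement $M$ one gets $\E_M[\pi^\infty_i]=\pi^\infty_i$ ``for both $i$''. The equality for the mover's payoff follows from maximality, and the equality for the other player's payoff follows from the lexicographic tie-break together with voluntariness ($\E_M[\pi^\infty_j]\ge\pi^\infty_j$ and the tie-break forbids a strict gain), so the claim is true, but the argument should be spelled out since it hinges on the specific equilibrium selection the paper fixes.
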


% Hence, the protocol cannot stabilize in an inefficient allocation: either the protocol has finitely many non-trivial messages and reaches an efficient allocation or the message complexity is infinite and allocation gets better and better with more rounds of the communication protocol.

We will apply a recursive argument. First we show in Lemma \ref{lem:zero_complexity} that if the allocation can't improve with any communication (zero message complexity) then the allocation must already be efficient. In the recursive step, we argue that if there are distributions $(\D_S, \D_B)$ which both have finite message complexity and lead to an inefficient outcome, then it is possible to construct distributions $(\D'_S, \D'_B)$ with strictly smaller message complexity that also lead to an inefficient outcome.

% A key component will be the construction in \cite{bergemann2015limits} which says that if Bob has complete information about Sally's type, there is a way to signal that allows him to keep Sally's utility unchanged and extract the remaining surplus from trade. Below we present their main result in the language of our paper: 

% \begin{lemma}[Bergemann-Brooks-Morris \cite{bergemann2015limits}]\label{lemma:bbm}
% If $\D_S$ has a single point in the support, then $$\pi_B^1(\D_S, \D_B) = W^*(\D_S, \D_B) - \pi_B^0(\D_S, \D_B) \qquad \pi_S^1(\D_S, \D_B) =  \pi_S^0(\D_S, \D_B).$$
% \end{lemma}

\begin{lemma}\label{lem:zero_complexity}
If $\MC(\D_S, \D_B) = 0$, then the equilibrium allocation is efficient, i.e., $$\pi^0_S(\D_S, \D_B) + \pi^0_B(\D_S, \D_B) = W^*(\D_S, \D_B).$$
\end{lemma}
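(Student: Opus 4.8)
The plan is to reduce the problem to the single-seller-type case covered by Lemma~\ref{lemma:bbm}, by analyzing the two-round protocol in which Sally first fully reveals her type and Bob then sends the Bergemann--Brooks--Morris signal. The key point is that this protocol is always feasible (voluntary) and always reaches the efficient allocation, so if $\MC(\D_S,\D_B)=0$ it forces $\pi^0$ itself to be efficient. I would start by unwinding $\MC(\D_S,\D_B)=0$: since each $\pi_i^t(\D_S,\D_B)$ is nondecreasing in $t$ by voluntary communication, and $\pi_S^0+\pi_B^0=\pi_S^\infty+\pi_B^\infty$, a sandwich gives $\pi_i^t(\D_S,\D_B)=\pi_i^0(\D_S,\D_B)$ for every $t\ge 0$ and $i\in\{S,B\}$; I will use this for $t=1$ and $t=2$. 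I also record the pointwise identity $u_S(a,\theta_S,\theta_B)+u_B(a,\theta_S,\theta_B)=(\theta_B-\theta_S)\,\one\{\theta_B\ge a\}\le(\theta_B-\theta_S)^+$, valid for any price $a$, which in expectation yields the ``only one direction is possible'' bound $\pi_S^0(\D_S,\D_B)+\pi_B^0(\D_S,\D_B)\le W^*(\D_S,\D_B)$.

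Next I would work at round $t=2$, which belongs to Sally. Let $M^*$ be her full-revelation refinement: sample $\theta_S\sim\D_S$ and output $\delta_{\theta_S}$ (clearly $\mu(M^*)=\D_S$). For each atom $\delta_{\theta_S}$, Lemma~\ref{lemma:bbm} applies to $(\delta_{\theta_S},\D_B)$ and gives $\pi_S^1(\delta_{\theta_S},\D_B)=\pi_S^0(\delta_{\theta_S},\D_B)$ together with $\pi_B^1(\delta_{\theta_S},\D_B)=W^*(\delta_{\theta_S},\D_B)-\pi_S^0(\delta_{\theta_S},\D_B)$, i.e.\ Bob captures all the gains from trade beyond Sally's monopoly revenue and the round-one allocation is efficient. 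Taking expectations over $\theta_S\sim\D_S$, and using that both $\pi_S^0(\cdot,\D_B)$ and $W^*(\cdot,\D_B)$ are linear in the seller's type distribution (Sally's optimal price does not depend on $\D_S$, and $W^*$ is an expectation over $\theta_S$), I obtain $\E_{\D_S'\sim M^*}[\pi_S^1(\D_S',\D_B)]=\pi_S^0(\D_S,\D_B)$ and $\E_{\D_S'\sim M^*}[\pi_B^1(\D_S',\D_B)]=W^*(\D_S,\D_B)-\pi_S^0(\D_S,\D_B)$. I then verify $M^*\in\VR_S^2(\D_S,\D_B)$: the condition for $i=S$ holds with equality, since both sides equal $\pi_S^0(\D_S,\D_B)=\pi_S^1(\D_S,\D_B)$; the condition for $i=B$ reads $W^*(\D_S,\D_B)-\pi_S^0(\D_S,\D_B)\ge\pi_B^1(\D_S,\D_B)=\pi_B^0(\D_S,\D_B)$, which is exactly the bound recorded above.

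Finally I would run the tie-breaking squeeze. Since $\pi_S^2(\D_S,\D_B)=\pi_S^0(\D_S,\D_B)$ and the voluntary refinement $M^*$ attains this very value for Sally, $M^*$ is a maximizer in the definition of $\pi_S^2$. The selected equilibrium breaks Sally's ties lexicographically in Bob's favor, so $\pi_B^2(\D_S,\D_B)\ge\E_{\D_S'\sim M^*}[\pi_B^1(\D_S',\D_B)]=W^*(\D_S,\D_B)-\pi_S^0(\D_S,\D_B)$. Combining with $\pi_B^2(\D_S,\D_B)=\pi_B^0(\D_S,\D_B)$ gives $\pi_S^0(\D_S,\D_B)+\pi_B^0(\D_S,\D_B)\ge W^*(\D_S,\D_B)$, and together with the reverse inequality this yields $\pi_S^0(\D_S,\D_B)+\pi_B^0(\D_S,\D_B)=W^*(\D_S,\D_B)$, i.e.\ the equilibrium allocation is efficient.

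The step I expect to be the crux is the feasibility check in the second paragraph: one has to notice that ``reveal, then run BBM'' is always a legal move for Sally, and that its only nontrivial voluntary constraint -- that revealing not hurt Bob's continuation payoff -- degenerates to the universally true inequality $\pi_S^0+\pi_B^0\le W^*$. After that insight, $\MC=0$ enters only to pin $\pi^1$ and $\pi^2$ down to $\pi^0$ so the lexicographic tie-break can be squeezed; the remaining care needed is to apply Lemma~\ref{lemma:bbm} in its ``buyer captures all the gains from trade above the seller's monopoly revenue'' form and to invoke the stated equilibrium tie-breaking rule exactly.
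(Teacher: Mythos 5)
Your proposal is correct and takes essentially the same approach as the paper: Sally fully reveals her type at $t=2$, the Bergemann--Brooks--Morris lemma is applied at $t=1$, and linearity of $\pi_S^0$ in $\D_S$ shows full revelation preserves Sally's payoff while handing Bob all residual surplus. You spell out the voluntary-refinement check and the lexicographic tie-break more explicitly than the paper's terse final sentence, but these are the same implicit steps, not a different argument.
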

\begin{proof}
Let $F(\D_S)$ be the message where Sally reveals her true cost (full revelation). Then we know that:
$$ \pi_S^1(\D_S,\D_B) = \pi_S^0(\D_S,\D_B) = \E_{\D'_S \sim F(\D_S)} [\pi_S^0(\D'_S,\D_B)] = \E_{\D'_S \sim F(\D_S)} [\pi_S^1(\D'_S,\D_B)]$$
where the first equality comes from the fact that $\MC(\D_S, \D_B) = 0$, the second equality follows from the fact that $\pi_S^0$ is linear in $\D_S$ and the third follows from Lemma \ref{lemma:bbm}, since $\D'_S$ has only a single point in its support. Hence Sally has a message in period $t=2$ that keeps her utility unchanged but improves Bob's utility if the allocation is inefficient. Therefore, the message complexity cannot be $0$.
\end{proof}

Finally, we describe how the message complexity of a node in the protocol relates to the message complexity of its children. Recall the definition of a \emph{child} from Section \ref{sec:message_complexity}.

\begin{lemma}
If $\MC(\D_S, \D_B) = t$ with $1 \leq t < \infty$, then any child of $(\D_S, \D_B)$ in period $t$ must have message complexity at most $t - 1$.
\label{lem:complexity_plus_one}
\end{lemma}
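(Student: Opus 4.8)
The plan is to show that the message complexity of any child cannot exceed $t-1$ by relating the limit utilities at the child to the limit utilities at the parent. Fix $(\D_S,\D_B)$ with $\MC(\D_S,\D_B) = t$, $1 \le t < \infty$, and suppose $t$ is odd (so the period-$t$ message is Bob's; the even case is symmetric with the roles of Sally and Bob swapped). Let $M = m_B^t(\D_S,\D_B) \in \VR_B^t(\D_S,\D_B)$ be the equilibrium refinement, so that $\pi_i^t(\D_S,\D_B) = \E_{\D'_B \sim M}[\pi_i^{t-1}(\D_S,\D'_B)]$ for $i \in \{S,B\}$, and every $\D'_B$ in the support of $M$ is a child of $(\D_S,\D_B)$ in period $t$. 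The first observation I would record is a monotonicity/consistency fact: because $M$ is itself a valid (voluntary) refinement, it is feasible to ``use up'' round $t$ to perform $M$ and then run an optimal $(t-1)$-round protocol from each child, which shows $\pi_i^t(\D_S,\D_B) \ge \E_{\D'_B \sim M}[\pi_i^{t-1}(\D_S,\D'_B)]$ is actually an equality by definition, and more importantly that the same reasoning applied at horizon $t'$ for any $t' \ge t$ gives $\pi_i^{t'}(\D_S,\D_B) \ge \E_{\D'_B \sim M}[\pi_i^{t'-1}(\D_S,\D'_B)]$. Taking $t' \to \infty$ and using the monotone convergence of $\pi_i^{t'}$ (justified by voluntary communication) together with dominated/monotone convergence under the expectation over $M$, I get
$$\pi_i^\infty(\D_S,\D_B) \ge \E_{\D'_B \sim M}[\pi_i^\infty(\D_S,\D'_B)], \qquad i \in \{S,B\}.$$

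Next I would invoke the hypothesis $\MC(\D_S,\D_B) = t$ to pin down the welfare. By definition of message complexity, $\pi_S^t(\D_S,\D_B) + \pi_B^t(\D_S,\D_B) = \pi_S^\infty(\D_S,\D_B) + \pi_B^\infty(\D_S,\D_B)$, i.e. the total welfare has already stabilized by round $t$. Summing the two inequalities above over $i \in \{S,B\}$ and combining with the equilibrium identity $\pi_S^t(\D_S,\D_B) + \pi_B^t(\D_S,\D_B) = \E_{\D'_B \sim M}[\pi_S^{t-1}(\D_S,\D'_B) + \pi_B^{t-1}(\D_S,\D'_B)]$, I obtain
$$\E_{\D'_B \sim M}\bigl[\pi_S^{t-1}(\D_S,\D'_B) + \pi_B^{t-1}(\D_S,\D'_B)\bigr] = \pi_S^\infty(\D_S,\D_B) + \pi_B^\infty(\D_S,\D_B) \ge \E_{\D'_B \sim M}\bigl[\pi_S^\infty(\D_S,\D'_B) + \pi_B^\infty(\D_S,\D'_B)\bigr].$$
But for each child $\pi_i^{t-1}(\D_S,\D'_B) \le \pi_i^\infty(\D_S,\D'_B)$ by monotonicity, so the integrands satisfy the reverse pointwise inequality; having the expectations agree forces pointwise equality $M$-almost surely:
$$\pi_S^{t-1}(\D_S,\D'_B) + \pi_B^{t-1}(\D_S,\D'_B) = \pi_S^\infty(\D_S,\D'_B) + \pi_B^\infty(\D_S,\D'_B)$$
for $M$-almost every child $\D'_B$, which is exactly the statement that $\MC(\D_S,\D'_B) \le t-1$.

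To finish, I would need to rule out that a child with $\MC(\D_S,\D'_B) \le t-1$ is somehow excluded by the tie-breaking/equilibrium-selection convention — but the statement ``$(\D'_B$ is a child in period $t)$'' already refers to the equilibrium protocol, so every child under consideration is of the form just analyzed, and the argument applies verbatim. The main obstacle I anticipate is the interchange of limit and expectation in the step $\pi_i^\infty(\D_S,\D_B) \ge \E_{\D'_B \sim M}[\pi_i^\infty(\D_S,\D'_B)]$: one must check that the monotone sequences $\pi_i^{t'-1}(\D_S,\D'_B)$ are uniformly bounded (in bilateral trade they are bounded by $W^*(\D_S,\D_B)$, which is integrable against $M$ since $W^*$ is concave/affine-dominated and $\mu(M) = \D_B$), so that monotone convergence applies cleanly; a secondary subtlety is making sure the ``run $M$ then continue optimally'' protocol is itself voluntary at every round so that it is a legitimate competitor in the $\max$ defining $\pi_i^{t'}$, which follows because $M \in \VR_B^t$ and the continuation protocols are equilibrium protocols and hence voluntary. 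Once these measure-theoretic points are handled, the chain of (in)equalities collapses to the desired pointwise welfare identity at the children.
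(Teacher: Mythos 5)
Your plan takes a genuinely different route from the paper (a direct $t'\to\infty$ limit argument showing the welfare identity at the children holds $M$-almost surely, versus the paper's proof by contradiction that replays the period-$t$ message at a later period where the child's welfare has strictly improved), but there is a real gap in your key inequality $\pi_i^{t'}(\D_S,\D_B)\ge\E_{\D'_B\sim M}[\pi_i^{t'-1}(\D_S,\D'_B)]$ in the case $i=S$. For $i=B$ the inequality does follow from sub-protocol optimality \emph{provided} $M\in\VR_B^{t'}(\D_S,\D_B)$; your claim that this ``follows because $M\in\VR_B^t$'' is incorrect, since membership in $\VR_B^{t'}$ requires $\E_M[\pi_i^{t'-1}]\ge\pi_i^{t'-1}(\D_S,\D_B)$, and $\pi_i^{t'-1}(\D_S,\D_B)$ can exceed $\pi_i^{t-1}(\D_S,\D_B)$. (The needed membership \emph{does} hold here, but only because $\MC(\D_S,\D_B)=t$ forces $\pi_i^{t'-1}(\D_S,\D_B)=\pi_i^t(\D_S,\D_B)=\E_M[\pi_i^{t-1}]\le\E_M[\pi_i^{t'-1}]$ for all $t'-1\ge t$ — you need to invoke the hypothesis, not just voluntariness of $M$ at period $t$.) More importantly, for $i=S$ the ``feasibility of using $M$'' argument simply does not apply: $\pi_S^{t'}(\D_S,\D_B)$ is determined by whichever refinement \emph{Bob} chooses in equilibrium at period $t'$, so exhibiting a good refinement $M$ does not lower-bound Sally's equilibrium payoff. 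To rescue the step you would have to first show that $M$ attains Bob's maximum at every $t'\ge t$ (using $\pi_B^{t'}(\D_S,\D_B)=\pi_B^t(\D_S,\D_B)\le\E_M[\pi_B^{t'-1}]\le\pi_B^{t'}(\D_S,\D_B)$), and then invoke the lexicographic tie-breaking in Bob's favor-of-Sally rule to conclude $\pi_S^{t'}(\D_S,\D_B)\ge\E_M[\pi_S^{t'-1}]$. Without that tie-breaking argument the chain $\pi_S^\infty+\pi_B^\infty\ge\E_M[\pi_S^\infty+\pi_B^\infty]$ is unjustified, and then the two inequalities in your final step go the \emph{same} way and force nothing.

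If you do patch those two points, the resulting argument is a clean alternative to the paper's contradiction: you establish directly that $\E_M[\pi_i^{t'-1}]=\E_M[\pi_i^{t-1}]$ for all $t'\ge t$, and pointwise monotonicity then gives $\pi_i^{t-1}(\D_S,\D'_B)=\pi_i^\infty(\D_S,\D'_B)$ at $M$-almost every child. One residual care point: your conclusion is an almost-sure statement, whereas the lemma is stated for \emph{every} child; this is harmless for atoms of $M$, which is what the downstream proof of Theorem~\ref{thm:main_trade_thm} actually uses, but it is worth noting.
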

\begin{proof}
We prove by contradiction. Assume $t$ is even (Sally's turn to communicate) and $(\D'_S, \D_B)$ is a child of $(\D_S, \D_B)$ with message complexity $t' \geq t$. Then there is some finite $T$ with the same parity as $t - 1$ (to make it the same player's turn) satisfying $t \leq T \leq t' + 1$, 
$$\pi^{T}_S(\D'_S, \D_B) + \pi^{T}_B(\D'_S, \D_B) > \pi^{t-1}_S(\D'_S, \D_B) + \pi^{t-1}_B(\D'_S, \D_B)$$ 
by the definition of message complexity. Furthermore, by voluntary communication we have that:
$$\pi^{T}_S(\D'_S, \D_B) \geq \pi^{t-1}_S(\D'_S, \D_B), \qquad \pi^{T}_B(\D'_S, \D_B) \geq \pi^{t-1}_B(\D'_S, \D_B).$$
% \renato{Note that what we need  from Lemma~\ref{lem:monotonicity} is that $\pi_i^{t+2} \geq \pi_i^{t}$. This may be a little easier to prove. }

Therefore, either $\pi^{T}_S(\D'_S, \D_B) > \pi^{t-1}_S(\D'_S, \D_B)$, or $\pi^{T}_S(\D'_S, \D_B) = \pi^{t-1}_S(\D'_S, \D_B)$ and $\pi^{T}_B(\D'_S, \D_B) > \pi^{t-1}_B(\D'_S, \D_B)$. In the first Sally could use her message of the $t$-th period in the $(T+1)$-th period, obtaining:
$$\pi^{T+1}_S(\D_S, \D_B) \geq \E_{\D'_S \sim m_S^{t}(\D_S, \D_B)} [\pi^{T}_S(\D'_S, \D_B)] > \E_{\D'_S \sim m_S^{t}(\D_S, \D_B)} [\pi^{t-1}_S(\D'_S, \D_B)] = \pi^{t}_S(\D_S, \D_B).$$
Or in the second case, she can use the same construction to obtain an outcome that gives her the same utility while benefiting Bob, leading to $\pi^{T+1}_S(\D_S, \D_B) + \pi^{T+1}_B(\D_S, \D_B) > \pi^{t}_S(\D_S, \D_B) + \pi^{t}_B(\D_S, \D_B)$.\footnote{If Sally has a strategy to strictly improve her utility, we have $\pi^{T+1}_S(\D_S, \D_B) > \pi^t_S(\D_S, \D_B)$. Otherwise, Bob's utility must satisfy $\pi^{T+1}_B(\D_S, \D_B) > \pi^{t}_B(\D_S, \D_B)$. There is always $\pi^{T+1}_S(\D_S, \D_B) \geq \pi^t_S(\D_S, \D_B)$ and $\pi^{T+1}_B(\D_S, \D_B) \geq \pi^t_B(\D_S, \D_B)$ by voluntary communication.} In either case we have a contradiction with the premise that $\MC(\D_S, \D_B) = t$.

The same argument holds if $t$ is odd and $(\D_S, \D'_B)$ is a child with message complexity $t' \geq t$ by replaying the argument reversing the roles of Sally and Bob.
\end{proof}

\begin{proof}[Proof of Theorem \ref{thm:main_trade_thm}]
Define the following sets of pairs of distributions:
$$\Gamma = \{ (\D_S, \D_B) \mid \MC(\D_S, \D_B) < \infty \text{ and } \pi^\infty_S(\D_S, \D_B) + \pi^\infty_B(\D_S, \D_B) < W^*(\D_S, \D_B) \}.$$
We want to show that $\Gamma = \emptyset$. If $\Gamma$ is not empty, choose any pair $(\D_S, \D_B)$ with smallest possible message complexity $t=\MC(\D_S, \D_B)$. We know by Lemma \ref{lem:zero_complexity} that $t \geq 1$ hence state $(\D_S, \D_B)$ must have an inefficient child $(\D'_S, \D_B)$ or $(\D_S, \D'_B)$ in period $t-1$. By Lemma~\ref{lem:complexity_plus_one} this pair must have message complexity at most $t-1$ contradicting the minimality of $(\D_S, \D_B)$.
% We prove by contradiction. Suppose there is an instance $(\D_S, \D_B)$ whose message complexity is $t < \infty$ but inefficient. Then there is an inefficient realized subgame at time $t$. By Lemma~\ref{lem:zero_complexity}, the subgame has message complexity at least $1$. By Lemma~\ref{lem:complexity_plus_one}, its parent has message complexity at least $2$, and its grandparent has message complexity at least $3$, and so on. Therefore, the message complexity of the root $(\D_S, \D_B)$ is at least $t + 1$, leading to a contradiction.
\end{proof}

\subsection{Efficient Communication Conjecture}

\newtheorem*{ecc}{Efficient Communication Conjecture}

Theorem \ref{thm:main_trade_thm} leaves it open the possibility that there are instances of bilateral trade where the allocation indefinitely improves with the increasing rounds of the communication protocol. This is an intriguing possibility, but based on evidence from computer simulations, we conjecture that an efficient allocation is always reached in finitely-many rounds of communication. We name it the ``Efficient Communication Conjecture'' (ECC) and formalize it below:

\begin{ecc}
In the bilateral trade setting, if $\Theta_B$ and $\Theta_S$ are finite types spaces, then for any distributions $\D_B \in \Delta(\Theta_B)$ and $\D_S \in \Delta(\Theta_S)$ the message complexity is finite: $\MC(\D_B, \D_S) < \infty$.
\end{ecc}

Resolving in either way would lead to an interesting result. Either we show that efficient allocation is always reached after finitely-many rounds or we exhibit the possibility where agents would prefer to have an infinitely-long conversation before trade. %\kangning{changed ``an example'' to ``the possibility'', as a proof may not explicitly give an example?}

In the Section \ref{sec:binary_types} we make progress on this question by showing an important class of bilateral trade instances where an efficient allocation is achieved with two rounds of communication. Whenever Bob's type space is binary ($\abs{\Theta_B}=2$) and Sally's type space is finite ($\abs{\Theta_S}< \infty$), we show that for any distributions $\D_B$ and $\D_S$ supported on $\Theta_B$ and $\Theta_S$ respectively, we have $\MC(\D_B, \D_S) \leq 2$. In that case, we can completely describe the communication protocol.

Whenever Bob's type space is larger ($\abs{\Theta_B}>2$), two messages are no longer enough. Consider the following example where $\Theta_B = \{3,6,12\}$ and $\Theta_S = \{0,2\}$. Let $\D_B$ be the uniform distribution over $\Theta_B$ and $\Theta_S$ be the distribution that puts $1/5$ probability on the high type. The welfare of the efficient allocation is $W^* = 33/5$. In the table below we write the payoffs of Sally and Bob and the welfare for $t=0,1,2$.

\def\arraystretch{1.5}
\begin{center}
\begin{tabular}{ c|c|c|c } 
  & $t=0$ & $t=1$ & $t=2$ \\ 
  \hline
  S & 58/15 & 58/15 & 62/15  \\ 
  B & 8/5 & 12/5 & 12/5  \\ 
  W & 82/15 & 94/15 & 98/15   \\ 
\end{tabular}
\end{center}

Observe that the welfare is strictly smaller than $W^*$. By Theorem \ref{thm:main_trade_thm} we have that further rounds of communication must improve the welfare and therefore $\MC(\D_S, \D_B) \geq 3$. In Section \ref{sec:three_rounds} of the appendix we present the details of this example.

\subsection{Short protocol for binary buyer types}\label{sec:binary_types}

For an important class of bilateral trade instances we show that a short (two-round) communication protocol leads to efficiency. For this class, we will be able to fully characterize the communication protocol. The assumption will be that the buyer has a binary type space, i.e. $\abs{\Theta_B}=2$. In order to construct the communication protocol, we will develop a higher-order indifference argument. The original indifference argument in \cite{bergemann2015limits} decomposes a prior into posterior distributions where the seller is indifferent between different prices. Here: since the seller knows that the buyer will signal in order to make her indifferent, she will send a preliminary signal that makes the buyer indifferent among different ways to make the seller indifferent.

\begin{theorem}[Efficient Communication for Buyer Binary Types] \label{thm:buyer_binary}
Let $\Theta_B = \{v_1, v_0\}$ and $\Theta_S = \{c_1, c_2, \hdots, c_n\}$. Then in two rounds of communication, Bob and Sally achieve an efficient allocation.
\end{theorem}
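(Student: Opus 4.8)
The plan is to reduce the statement to a single structural claim about Sally's message in round $2$, and then to establish that claim by a concavification argument. Assume without loss of generality that $v_0 < v_1$ (otherwise Bob has essentially one type and trade is already efficient), and encode the information Sally holds about Bob by the scalar $p := \D_B(\{v_1\}) \in [0,1]$. Two easy observations set the stage. First, inefficiency can only come from a seller cost $c < v_0$: such a seller, facing belief $p$, prices at $v_0$ (efficient, both buyer types trade) precisely when $p \le p^*(c) := (v_0-c)/(v_1-c)$ and prices at $v_1$ (inefficient, the low buyer type is excluded) when $p > p^*(c)$, while any $c \ge v_0$ induces efficient pricing at every $p$. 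Second, $W^*$ is linear in each of its two distribution arguments. As for Bob's round-$1$ move: since $p \mapsto \pi_S^0(\D_S,p)$ is a finite maximum of affine functions, hence convex, every refinement Bob might send is automatically voluntary for Sally, and since Bob also optimizes his own payoff this gives $\pi_B^1(\D_S,p) = \mathrm{cav}_p\,\pi_B^0(\D_S,\cdot)(p)$, with Bob's optimal messages being exactly the decompositions of $p$ supporting this concave hull. A short computation gives $\pi_B^0(\D_S,p) = (v_1-v_0)\,p\cdot\P_{c\sim\D_S}[\,c \le c^\circ(p)\,]$ with $c^\circ(p)=(v_0-pv_1)/(1-p)$ decreasing in $p$, so $p\mapsto\pi_B^0(\D_S,p)$ is piecewise linear with downward jumps exactly at the thresholds $p^*(c)$, $c\in\mathrm{supp}(\D_S)\cap[0,v_0)$. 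The crucial point is a tie-breaking one: whenever, at some state $(\D_S',p)$, a hull-supporting decomposition uses only posteriors $p'$ that induce efficient pricing from all seller types (namely $p' \le p^*(c_{\max}(\D_S'))$, where $c_{\max}(\D_S')$ is the largest cost below $v_0$ in $\mathrm{supp}(\D_S')$, or $p' = 1$, where the low buyer type is absent), that decomposition maximizes Sally's payoff among Bob's payoff-maximizing messages, so the lexicographic equilibrium selection makes Bob play an efficient response.

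Given this, it suffices to construct a refinement $M^*$ of $\D_S$ with (a) every posterior $\D_S'$ in the support of $M^*$ admits an efficient hull-supporting decomposition at $p$ (so, by the previous paragraph and the linearity of $W^*$, $\pi_S^1(\D_S',\D_B)+\pi_B^1(\D_S',\D_B) = W^*(\D_S',\D_B)$ for each such $\D_S'$), and (b) $\E_{\D_S'\sim M^*}[\pi_B^1(\D_S',\D_B)] = \pi_B^1(\D_S,\D_B)$, i.e.\ Bob is left exactly indifferent to receiving Sally's message. Indeed, (b) makes $M^*$ voluntary for Bob, and for every voluntary round-$2$ message $M$ one has $\E_M[\pi_S^1(\D_S',\D_B)] \le W^*(\D_S,\D_B) - \E_M[\pi_B^1(\D_S',\D_B)] \le W^*(\D_S,\D_B) - \pi_B^1(\D_S,\D_B)$, using $\pi_S^1+\pi_B^1 \le W^*$ posterior-wise and linearity; by (a), $M^*$ meets this bound with equality, so $\pi_S^2(\D_S,\D_B) = W^*(\D_S,\D_B)-\pi_B^1(\D_S,\D_B)$. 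Finally $\pi_B^2 \ge \pi_B^1(\D_S,\D_B)$ by voluntary communication while $\pi_S^2+\pi_B^2 \le W^*(\D_S,\D_B)$, which forces $\pi_B^2 = \pi_B^1(\D_S,\D_B)$ and hence $\pi_S^2+\pi_B^2 = W^*(\D_S,\D_B)$, as required.

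The construction of $M^*$ is the higher-order indifference argument and is where the real work lies. List the seller costs below $v_0$ as $c_1 < \cdots < c_k$, so $p^*(c_1) > \cdots > p^*(c_k)$. If $p \le p^*(c_k)$ the allocation is already efficient and $M^*$ is trivial; otherwise $M^*$ discloses \emph{nested} information — with suitable probabilities it reveals that the cost lies in $\{c_1\}$, or in $\{c_1,c_2\}$, \dots, or in $\{c_1,\dots,c_k\}\cup\{c:c\ge v_0\}$ — and in each retained posterior it \emph{shifts probability mass toward the larger costs}. The point of shifting mass upward is that it enlarges the downward jump of $\pi_B^0(\D_S',\cdot)$ at $p^*(c_j)$ and flattens the function to its right, so that the concave hull of $\pi_B^0(\D_S',\cdot)$ becomes a single ``tent'' whose descending edge runs from $\bigl(p^*(c_{\max}(\D_S')),\,\pi_B^0(\D_S',p^*(c_{\max}(\D_S')))\bigr)$ down to $(1,0)$ and passes through the current point $p$; Bob then resolves $\D_S'$ efficiently by mixing the indifference posterior $p^*(c_{\max}(\D_S'))$ (where the marginal seller type, being indifferent, prices at $v_0$) with the posterior $p' = 1$ (posteriors $\D_S'$ with $p \le p^*(c_{\max}(\D_S'))$ are trivially efficient, and the point mass $\{c_1\}$ is handled directly by Lemma~\ref{lemma:bbm}). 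The weights with which $M^*$ spreads $\D_S$ over the nested events are chosen exactly so that Bob's expected continuation value stays equal to $\pi_B^1(\D_S,\D_B)$, which is property (b).

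I expect the main obstacle to be verifying that one nested refinement simultaneously delivers property (a) in \emph{every} posterior and property (b) — in particular, that the mass-shifting weights producing the clean ``tent'' in each posterior are mutually consistent with $\mu(M^*) = \D_S$ and with keeping Bob's expected value fixed. A second, more bookkeeping-heavy difficulty is the treatment of indifferences: the base-game tie-break (an indifferent seller must price at $v_0$), the equilibrium tie-break among Bob's round-$1$ optimal messages (resolved toward the efficient decomposition), and the round-$2$ tie-break for Sally all have to line up so that the chain of equalities in the second paragraph is exact. Once $M^*$ is in hand, averaging property (a) over its posteriors and invoking linearity of $W^*$ finishes the proof.
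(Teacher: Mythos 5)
Your high-level reduction is the same as the paper's: construct Sally's round-$2$ signal $M^*$ so that (a) every posterior leads to an efficient Bob-best-response in round~$1$, and (b) Bob's expected value is unchanged, then combine $\pi_S^1+\pi_B^1\le W^*$ (posterior-wise and linearity) with voluntary communication to pin down $\pi_S^2+\pi_B^2=W^*$. The central device is also the paper's: per-support ``tent'' indifference distributions $q^X$ for which $\pi_B^0(\cdot,q^X)$ has all its peaks collinear with $(1,0)$, making Bob's concavification a single line segment.

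The gap is the construction of $M^*$. You decompose $q=\D_S$ into indifference distributions supported on \emph{prefix} sets $\{c_1\},\{c_1,c_2\},\dots$, but that decomposition need not have nonnegative weights. With $k=2$, take $q=(\epsilon,1-\epsilon)$ on $\{c_1,c_2\}$; writing $q^{\{1,2\}}=(a,1-a)$ with $0<a<1$ and $q^{\{1\}}=(1,0)$, solving $q=\lambda_1 q^{\{1\}}+\lambda_2 q^{\{1,2\}}$ forces $\lambda_2=(1-\epsilon)/(1-a)$, which exceeds $1$ whenever $\epsilon<a$, giving $\lambda_1<0$. The paper instead uses a \emph{greedy} nested decomposition (Lemma~\ref{lemma:indifference_decomposition}): subtract as much of the indifference distribution on the current support as possible while keeping the remainder nonnegative, then recurse on the strictly smaller support that results; the cost that drops out at each step is whichever coordinate is tight in the ratio constraint, not necessarily the largest or the smallest. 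In the example above, for small $\epsilon$ the greedy removes $c_1$ first, giving $\{c_1,c_2\}\supseteq\{c_2\}$ --- not a prefix. Nestedness (rather than the prefix structure) is exactly what the downstream argument needs: the proof that $\sum_i\lambda_i\pi_B^1(p,q^{X_i})\le\pi_B^1(p,q)$ works because each $p\mapsto\pi_B^1(p,q^{X_i})$ is a two-piece tent with kink at $p^*_{\max X_i}$, and monotonicity of $\max X_i$ along the nested chain means all of them are simultaneously linear on a common interval around the current $p$. A secondary issue: once the nested supports and the constraint $\mu(M^*)=\D_S$ are fixed, the weights $\lambda_i$ are determined, so your property (b) is not a degree of freedom to be ``chosen exactly'' --- it is a claim that must be proved (the paper does so as the delicate direction of the last lemma in Section~\ref{sec:binary_types}, again using nestedness).
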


The rest of this section is dedicated to proving Theorem \ref{thm:buyer_binary}. We will use $p$ to denote the probability that the buyer has the high type and let $q$ be a vector in the $n$-simplex $\Delta_n = \{q \in [0,1]^n; \sum_i q_i = 1\}$ representing the probability that Sally has each type. This way, we can represent the pair $(\D_B, \D_S)$ by $(p,q)$.

Without loss of generality we will also assume that:
$$c_1 < c_2 < \cdots < c_n < v_0 < v_1$$
since any of Sally's type such that $c_i > v_0$ is ignored by Bob since she will price at $v_1$ or higher anyway if she has that type, regardless of what her information about Bob's type is. %\kangning{$v_1$ ``or higher''}

To prove this theorem, we will first characterize $\pi^0(p,q)$ and then obtain $\pi^1(p,q)$ and $\pi^2(p,q)$ using the equilibrium definition (equation \eqref{eq:equilibrium}). We will view this as a process of alternating concavification: we alternatively replace the payoffs of Bob and Sally by their concave hulls.

\subsubsection{Characterizing $\pi^0$}
Without any communication, Sally's decision only depends on $p$ and her type, and hence the payoffs will be linear in $q$, i.e.:
$$\pi_i^0(p,q) = \sum_{j=1}^n q_j\cdot \pi_i^0(p,e_j), \quad \forall (p,q) \in [0,1] \times \Delta_n, i \in \{B,S\}$$
where $e_j$ is the $j$-th unit vector. Therefore, we can focus on understanding the cases for $q = e_j$. If Sally has cost $c_j$ and Bob has probability $p$ of being the high type, Sally will price at $v_1$ whenever:
$$p (v_1 - c_j) > v_0 - c_j$$
and at $v_0$ otherwise. This gives the threshold $p^*_j$ defined as follows:
$$p^*_j = \frac{v_0 - c_j}{v_1 - c_j}$$
It will be convenient in the following analysis to set $p^*_0 = 1$. 

We can plot the payoff curves for Bob and Sally at $q = e_j$ in Figure \ref{fig:pi0}. Before $p_j^*$, Sally prices at Bob's low type and get constant revenue equal to $v_0-c_j$. After $p_j^*$, Sally prices at Bob's high type and his payoff is zero. For each $q \in \Delta_n$, Bob's utility is a combination of $n$ curves like the ones in Figure \ref{fig:pi0} with breakpoints at $p_i^*$. See the left side of Figure \ref{fig:pi0_buyer}.

\begin{figure}[h]
\centering
\begin{tikzpicture}[scale=3.6] %FULL_PAGE: \begin{tikzpicture}[scale=4]
\begin{scope}[xshift=0cm]
    \draw (0,.9) -- (0,0) -- (1,0);
    \tikzmath{
        \q=0;
        \s=.15;
        \p1 = (\q + (1-\q)*3)*\s;
        \p2 = (\q + (1-\q)*3)*\s;
        \p3 = (2*\q + (1-\q)*3)*\s;
        \p4 = (4*\q + (1-\q)*6)*\s;
        {\draw [line width=1.5pt, color=red] (0,\p1) -- (1/4,\p2) -- (1/2, \p3) -- (1,\p4); };
        \s = 1/4;
        \p2 = (3/4)*\s;
        \p4 = ((1-\q)*(3/4))*\s;
        \p3 = ((1-\q)*(3/2))*\s;
        {\draw [line width=1.5pt, color=blue] (0,0) -- (1/4,\p2) -- (1/4, \p4) -- (1/2, \p3) -- (1/2, 0) -- (1,0); };
    };
    %\draw [line width=1.5pt, color=blue] (0,0) -- (1/4,(3/4)/15) -- (1/4,((1-\q)*(3/4))/15) -- (1/2,((1-\q)*(3/2))/15) -- (1/2,0) -- (1,0);
    \node[circle,fill,inner sep=1pt] at (1,0) {};
    \node[circle,fill,inner sep=1pt] at (0,0) {};
    \node[circle,fill,inner sep=1pt] at (1/2,0) {};
    \node at (0,-.08) {$0$};
    \node at (1,-.08) {$1$};
    \node at (1/2,-.08) {$p^*_j$};
    \node at (.22,.85) {$\pi^0_i(p,e_j)$};
    \node at (.95,.07) {$B$};
    \node at (.95,.75) {$S$};
\end{scope}

\end{tikzpicture}
\caption{Payoffs of Bob (blue) and Sally (red) without communication for $q =e_j$.}
\label{fig:pi0}
\end{figure}
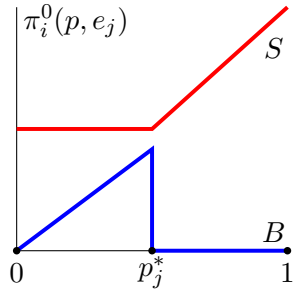

\begin{figure}[h]
\centering
\begin{tikzpicture}[scale=3.05] %FULL_PAGE \begin{tikzpicture}[scale=3.5]
 %\begin{scope}[xshift=1.5cm]
    \draw (0,.85) -- (0,0) -- (2,0);
    \node[circle,fill,inner sep=1pt] at (0,0) {};
    \node[circle,fill,inner sep=1pt] at (.5,0) {};
    \node[circle,fill,inner sep=1pt] at (.8,0) {};
    \node[circle,fill,inner sep=1pt] at (1.1,0) {};
    node[circle,fill,inner sep=1pt] at (1.7,0) {};
    \node[circle,fill,inner sep=1pt] at (2,0) {};
    \node at (0,-.08) {$0$};
    % \node at (2,-.08) {$1$};
    \node at (.5,-.08) {$p_4^*$};
    \node at (.8,-.08) {$p_3^*$};
    \node at (1.1,-.08) {$p_2^*$};
    \node at (1.7,-.08) {$p_1^*$};
    \node at (2,-.08) {$p_0^*=1$};

    \draw [line width=1.5pt, color=blue] (0,0) --
    (.5,1*.5)--(.5,.8*.5)--
    (.8,.8*.8)--(.8,.8*.4)--
    (1.1,1.1*.4)--(1.1,1.1*.3)--
    (1.7,1.7*.3)--(1.7,0)--(2,0);
      \node at (.23,.85) {$\pi^0_B(p,q)$};

     \begin{scope}[xshift=2.3cm]
    \draw (0,.85) -- (0,0) -- (2,0);
    \node[circle,fill,inner sep=1pt] at (0,0) {};
    \node[circle,fill,inner sep=1pt] at (.5,0) {};
    \node[circle,fill,inner sep=1pt] at (.8,0) {};
    \node[circle,fill,inner sep=1pt] at (1.1,0) {};
    node[circle,fill,inner sep=1pt] at (1.7,0) {};
    \node[circle,fill,inner sep=1pt] at (2,0) {};
    \node at (0,-.08) {$0$};
    % \node at (2,-.08) {$1$};
    \node at (.5,-.08) {$p_4^*$};
    \node at (.8,-.08) {$p_3^*$};
    \node at (1.1,-.08) {$p_2^*$};
    \node at (1.7,-.08) {$p_1^*$};
    \node at (2,-.08) {$p_0^*=1$};

    \draw [line width=1.5pt, color=blue, dashed] (0,0) --
    (.5,1*.5)--(.5,.8*.5)--
    (.8,.8*.8)--(.8,.8*.4)--
    (1.1,1.1*.4)--(1.1,1.1*.3)--
    (1.7,1.7*.3)--(1.7,0)--(2,0);
    \draw [line width=1.5pt, color=blue] (0,0) --
    (.5,1*.5)--(.8,.8*.8)--
    (1.7,1.7*.3)--(2,0);
    \node at (.23,.85) {$\pi^1_B(p,q)$};
    \end{scope}

\end{tikzpicture}
\caption{Bob's best response in $t=1$ for a generic distribution $q$.}
\label{fig:pi0_buyer}
\end{figure}
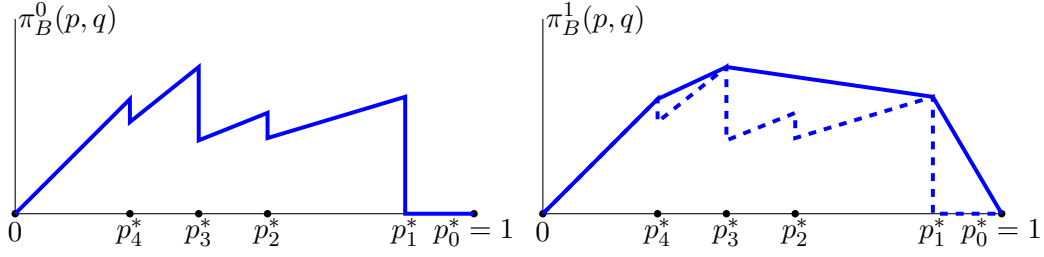

\subsubsection{Characterizing $\pi^1$}
\label{sec:pi_1}
From the previous subsection we understand the outcome if we reach the last round ($t=0$) for each pair $(p,q)$. From that we can infer what should be Bob's response in the second-to-last round ($t=1$) for each pair $(p,q)$. By an information refinement, Bob can obtain a convex combination of payoffs $\pi^0_B(p',q)$ as long as the distributions average to $p$. In other words, we can translate the equilibrium condition \eqref{eq:equilibrium} as:
$$\pi^1_B(p,q) = \max_\lambda \sum_u \lambda_u \cdot \pi^0_B(p_u,q) \quad \text{s.t} \quad \sum_u \lambda_u \cdot p_u = p, \quad \sum_u \lambda_u = 1, \quad \lambda_u \in [0,1]$$
which means that $\pi^1_B(p,q)$ is the concave hull of $\pi^0_B(p,q)$ along the $p$-coordinate. The convex combination coefficients $\lambda_u$ tell us how to obtain $\pi^1_S(p,q)$ from $\pi^0_S(p,q)$.

For any fixed $q$, the function $p \mapsto \pi^0_B(p,q)$ is piecewise linear with peaks at $p_j^*$. Bob's signal at $t=1$ is obtained by taking the concave hull of his utility curve and signaling using the endpoints (right side of Figure \ref{fig:pi0_buyer}). For the example in the figure, Bob will use the following strategy:
\begin{itemize}
    \item Stay silent if $p \leq p^*_4$.
    \item If $p \in [p^*_4, p^*_3]$, refine $p$ to the endpoints $p^*_4$ and $p^*_3$.
    \item If $p \in [p^*_3, p^*_1]$, refine $p$ to the endpoints $p^*_3$ and $p^*_1$.
    \item If $p \in [p^*_1, 1]$, refine $p$ to the endpoints $p^*_1$ and $1$.
\end{itemize}

% Bob's signal at $t=1$ is obtained by taking the concave hull of his utility curve and signaling using the endpoints (right side of Figure \ref{fig:pi0_buyer}).

\subsubsection{Characterizing $\pi^2$}
We will show that at $t=2$, Sally has a signal that keeps Bob's payoffs unchanged and captures the remaining surplus from the optimal allocation:
\begin{equation}\label{eq:buyer_binary_sally2}
\pi^2_S(p,q) = W^*(p,q) - \pi^1_S(p,q) \qquad \pi^2_B(p,q) =  \pi^1_B(p,q)
\end{equation}

To construct this signal, we will first define one  \emph{indifference distribution} for each possible support of Sally's distribution. For each non-empty subset $X = \{x(1), x(2), \hdots, x(k) \} \subseteq [n]$ define the distribution $q^X$ over $[n]$ as follows (using $p_0^* = 1$ and $x(0) = 0$):
$$q^X_{x(i)} = \frac{ \big(p^*_{x(i)}\big)^{-1} - \big(p^*_{x(i-1)}\big)^{-1} }{ \big(p^*_{x(k)}\big)^{-1} - 1} \text{ for } i=1, \ldots, k \quad \text{ and } \quad
q_j^X = 0 \text{ for } j \notin X$$
The advantage of indifference distributions is that the peaks of $\pi_B^0(p,q^X)$ are aligned (see Figure \ref{fig:pi0_buyer_indiff}) and hence if Bob's information about Sally is $q^X$ at $t=1$, then Bob's optimal strategy is the following:
\begin{itemize}
    \item If $p \leq p_{x(k)}^*$, Bob will stay silent.
    \item If $p > p_{x(k)}^*$, Bob will refine his signal to either $1$ or $p_{x(k)}^*$.
\end{itemize}
In either case, we are at an efficient allocation. If $p \leq p_{x(k)}^*$, then Sally prices at the lowest signal of Bob's support. If $p=1$ then Bob has the high type with probability $1$ and Sally prices at that point. Hence:
\begin{equation}\label{eq:bob_response_indiff}
\pi_S^1(p,q^X) + \pi_B^1(p,q^X) = W^*(p, q^X)
\end{equation}

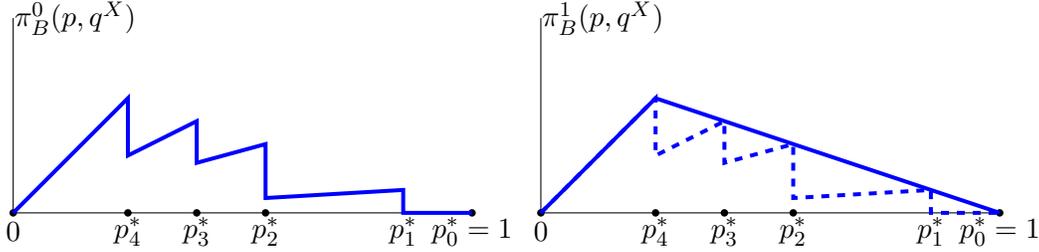
\begin{figure}[h]
\centering
\begin{tikzpicture}[scale=3.05] %FULL_PAGE \begin{tikzpicture}[scale=3.5]
 %\begin{scope}[xshift=1.5cm]
    \draw (0,.85) -- (0,0) -- (2,0);
    \node[circle,fill,inner sep=1pt] at (0,0) {};
    \node[circle,fill,inner sep=1pt] at (.5,0) {};
    \node[circle,fill,inner sep=1pt] at (.8,0) {};
    \node[circle,fill,inner sep=1pt] at (1.1,0) {};
    node[circle,fill,inner sep=1pt] at (1.7,0) {};
    \node[circle,fill,inner sep=1pt] at (2,0) {};
    \node at (0,-.08) {$0$};
    % \node at (2,-.08) {$1$};
    \node at (.5,-.08) {$p_4^*$};
    \node at (.8,-.08) {$p_3^*$};
    \node at (1.1,-.08) {$p_2^*$};
    \node at (1.7,-.08) {$p_1^*$};
    \node at (2,-.08) {$p_0^*=1$};
    
    %[1.0, 0.5, 0.2727272727272727, 0.05882352941176472]

    \draw [line width=1.5pt, color=blue] (0,0) --
    (.5,1*.5)--(.5,.5*.5)--
    (.8,.5*.8)--(.8,.2727*.8)--
    (1.1,1.1*.2727)--(1.1,1.1*0.0588)--
    (1.7,1.7*0.0588)--(1.7,0)--(2,0);
     \node at (.27,.85) {$\pi^0_B(p,q^X)$};

     \begin{scope}[xshift=2.3cm]
    \draw (0,.85) -- (0,0) -- (2,0);
    \node[circle,fill,inner sep=1pt] at (0,0) {};
    \node[circle,fill,inner sep=1pt] at (.5,0) {};
    \node[circle,fill,inner sep=1pt] at (.8,0) {};
    \node[circle,fill,inner sep=1pt] at (1.1,0) {};
    node[circle,fill,inner sep=1pt] at (1.7,0) {};
    \node[circle,fill,inner sep=1pt] at (2,0) {};
    \node at (0,-.08) {$0$};
    % \node at (2,-.08) {$1$};
    \node at (.5,-.08) {$p_4^*$};
    \node at (.8,-.08) {$p_3^*$};
    \node at (1.1,-.08) {$p_2^*$};
    \node at (1.7,-.08) {$p_1^*$};
    \node at (2,-.08) {$p_0^*=1$};

    \draw [line width=1.5pt, color=blue, dashed] (0,0) --
    (.5,1*.5)--(.5,.5*.5)--
    (.8,.5*.8)--(.8,.2727*.8)--
    (1.1,1.1*.2727)--(1.1,1.1*0.0588)--
    (1.7,1.7*0.0588)--(1.7,0)--(2,0);
    \draw [line width=1.5pt, color=blue] (0,0) --
    (.5,1*.5)--(2,0);
    \node at (.27,.85) {$\pi^1_B(p,q^X)$};
    \end{scope}

\end{tikzpicture}
\caption{Bob's best response in $t=1$ for an indifference distribution $q$.}
\label{fig:pi0_buyer_indiff}
\end{figure}

With that we are ready to prove the statement of equation \eqref{eq:buyer_binary_sally2} which directly implies Theorem \ref{thm:buyer_binary}. We will decompose the proof in two lemmas:

\begin{lemma}\label{lemma:indifference_decomposition}
Let $q$ be any distribution over $[n]$. Then there exists a decomposition of $q$ into nested indifference distributions: there are subsets $$[n] \supseteq X_1 \supseteq X_2 \supseteq  \cdots \supseteq X_k \neq \emptyset$$ such that:
$$q = \sum_{i=1}^k \lambda_i q^{X_i} \quad \text{s.t.} \quad \sum_{i=1}^k \lambda_i = 1 \text{ and } \lambda_i \geq 0$$. 
\end{lemma}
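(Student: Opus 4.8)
The plan is to prove the lemma by a greedy ``peeling'' argument together with induction on $|\mathrm{supp}(q)|$. First I would record a preliminary fact about the thresholds. Since $c_1 < c_2 < \cdots < c_n < v_0 < v_1$ and $p^*_j = (v_0-c_j)/(v_1-c_j) = 1 + (v_0-v_1)/(v_1-c_j)$, the map $c \mapsto (v_0-c)/(v_1-c)$ is strictly decreasing on $(-\infty,v_1)$; hence $1 = p^*_0 > p^*_1 > p^*_2 > \cdots > p^*_n > 0$, so $j \mapsto (p^*_j)^{-1}$ is strictly increasing on $\{0,1,\dots,n\}$. Consequently, for any non-empty $X = \{x(1) < \cdots < x(k)\} \subseteq [n]$, every coordinate $q^X_{x(i)}$ has a strictly positive numerator (since $x(i) > x(i-1)$) and a strictly positive denominator, and the numerators telescope, so
\[
\sum_{i=1}^k q^X_{x(i)} = \frac{(p^*_{x(k)})^{-1} - (p^*_0)^{-1}}{(p^*_{x(k)})^{-1} - 1} = 1 .
\]
Thus each $q^X$ is a genuine probability distribution with support \emph{exactly} $X$, and in particular $q^{\{j\}} = e_j$.

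Next I would induct on $m := |\mathrm{supp}(q)|$. If $m=1$, then $q = e_j = q^{\{j\}}$ for $j = \mathrm{supp}(q)$, and the claim holds with $k=1$. For $m \geq 2$, set $X_1 := \mathrm{supp}(q)$; if $q = q^{X_1}$ we are done, so assume otherwise and define $\lambda_1 := \min_{j \in X_1} q_j / q^{X_1}_j$, which is well-defined and strictly positive because $q_j > 0$ and $q^{X_1}_j > 0$ for every $j \in X_1$. Summing $q_j - \lambda q^{X_1}_j$ over $j \in X_1$ gives $1-\lambda$, so $\lambda_1 \leq 1$; and since $q \neq q^{X_1}$ while both are probability vectors with support $X_1$, some coordinate satisfies $q_j < q^{X_1}_j$, forcing $\lambda_1 < 1$. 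Let $q' := (q - \lambda_1 q^{X_1})/(1-\lambda_1)$: by the choice of $\lambda_1$ it is coordinatewise non-negative, it sums to $1$, and the minimizing coordinate vanishes, so $\mathrm{supp}(q') \subsetneq X_1$ and $|\mathrm{supp}(q')| \leq m-1$. Applying the induction hypothesis to $q'$ yields nested subsets $X_1 \supsetneq X_2 \supseteq \cdots \supseteq X_k \neq \emptyset$ and $\mu_2,\dots,\mu_k \geq 0$ with $\sum_{i\geq2}\mu_i = 1$ and $q' = \sum_{i=2}^k \mu_i q^{X_i}$. Setting $\lambda_i := (1-\lambda_1)\mu_i$ for $i \geq 2$ gives $q = \lambda_1 q^{X_1} + \sum_{i=2}^k \lambda_i q^{X_i}$ with $\sum_{i=1}^k \lambda_i = \lambda_1 + (1-\lambda_1) = 1$ and all $\lambda_i \geq 0$, completing the induction.

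I do not anticipate a serious obstacle: once one identifies the right move — peel off the largest multiple of $q^{\mathrm{supp}(q)}$ keeping all coordinates non-negative — the rest is bookkeeping. The points that need care are (i) well-foundedness of the recursion, i.e.\ that using the \emph{maximal} feasible $\lambda_1$ strictly shrinks the support and that $\lambda_1 < 1$ so the normalization defining $q'$ is legitimate (both following from $q \neq q^{X_1}$ with equal supports), and (ii) the fact that $q^X$ has support equal to all of $X$, so that $\lambda_1$ is a minimum over $X_1$ with strictly positive denominators — this is exactly what the strict monotonicity of $(p^*_j)^{-1}$ provides. This decomposition will then be combined with \eqref{eq:bob_response_indiff} to establish \eqref{eq:buyer_binary_sally2}.
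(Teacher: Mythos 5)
Your proof is correct and follows essentially the same peeling argument as the paper: pick the support $X$, subtract the largest multiple of $q^X$ that keeps coordinates non-negative, renormalize, and recurse on the strictly smaller support. (A minor remark: the paper writes $z = \min_{i\in X} q^X_i/q_i$, which appears to be a typo for the ratio you use, $\min_{i\in X} q_i/q^X_i$ — only your version makes $q - z q^X$ coordinatewise non-negative and is consistent with the rest of the paper's argument.) The preliminary checks you include — strict monotonicity of $p^*_j$, that $q^X$ sums to one via the telescoping numerators, and that $\mathrm{supp}(q^X) = X$ — are left implicit in the paper but are exactly the facts needed to justify the recursion.
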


\begin{proof}
Let $X$ be the support of $q$ and define $z = \min_{i \in X} q^X_i / q_i$. Since both $q$ and $q^X$ have support $X$, the components $q^X_i$ and $q_i$ are strictly positive, hence $z > 0$. Also, observe that $z = \min_i q^X_i / q_i \leq (\sum_i q^X_i)/ (\sum_i q_i) = 1$ since both are distributions. If $z=1$ then $q=q^X$ and we are done since $q$ has a trivial decomposition into indifference distributions. 

If $z < 1$, define $q' \in \Delta_n$ as follows:
$$q' = \frac{q-z q^X}{1-z}$$
To see that $q'$ is a distribution, observe first that $q_i - z q^X_i \geq 0$ by the definition of $z$ and that:
$$\sum_i q'_i = \frac{1}{1-z} \sum_i (q_i - z q^X_i) = \frac{1-z}{1-z} = 1$$
Finally, note that if $i$ is the index such that $z =  q^X_i / q_i$ we have $q'_i = 0$. Hence we showed how to decompose $q$ into a distribution of the form $q^X$ and a distribution with support strictly contained in $X$. Applying the construction recursively for $q'$ we obtain the decomposition in the statement.
\end{proof}

\begin{lemma}
Sally's signal that at $t=2$ refines her distribution into indifference distributions is her best response and satisfied equation \eqref{eq:buyer_binary_sally2}.
\end{lemma}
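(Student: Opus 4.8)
The plan is to exhibit Sally's best response at $t=2$ explicitly: she announces the nested indifference decomposition $q=\sum_{i=1}^{k}\lambda_i q^{X_i}$ produced by Lemma~\ref{lemma:indifference_decomposition}. Call this refinement $M_q$; since $\mu(M_q)=\D_S$ it is legitimate, and it remains to verify that $M_q$ is voluntary, that it maximizes Sally's continuation payoff among voluntary refinements, and that the resulting payoffs are those of \eqref{eq:buyer_binary_sally2}. The whole argument hinges on the single identity
\begin{equation*}
\sum_{i=1}^{k}\lambda_i\,\pi^1_B(p,q^{X_i}) \;=\; \pi^1_B(p,q)\qquad\text{for every }p\in[0,1],
\end{equation*}
which, combined with the efficiency of the indifference terms $\pi^1_S(p,q^{X_i})=W^*(p,q^{X_i})-\pi^1_B(p,q^{X_i})$ from \eqref{eq:bob_response_indiff} and the linearity of $W^*$ in $\D_S$, shows that under $M_q$ Bob's continuation payoff is $\pi^1_B(p,q)$ and Sally's is $W^*(p,q)-\pi^1_B(p,q)$.

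To prove the identity I would first record the shape of $\pi^1_B(\cdot,q^X)$ for an indifference distribution $q^X$. Combining Bob's $t=1$ best response described just before \eqref{eq:bob_response_indiff} (stay silent for $p\le a_X:=p^*_{\max X}$, else split $p$ between $a_X$ and $1$) with $\pi^0_B(a_X,q^X)=(v_1-v_0)a_X$ and $\pi^0_B(1,q^X)=0$ gives the ``tent''
\begin{equation*}
\pi^1_B(p,q^X) \;=\; (v_1-v_0)\cdot\min\!\Big(p,\ \tfrac{a_X}{1-a_X}(1-p)\Big),
\end{equation*}
a concave piecewise-linear function with its only kink at $a_X$. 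Because the supports are nested, $a_1\le a_2\le\cdots\le a_k$, so $\phi:=\sum_i\lambda_i\pi^1_B(\cdot,q^{X_i})$ is concave, piecewise-linear, and affine on each interval $[a_l,a_{l+1}]$ (with $a_0:=0$, $a_{k+1}:=1$). Since $\pi^0_B$ is linear in $q$ (Section~\ref{sec:pi_1}) and $\pi^1_B(\cdot,q^{X_i})\ge\pi^0_B(\cdot,q^{X_i})$, we also have $\phi\ge\pi^0_B(\cdot,q)$, and $\phi(0)=\phi(1)=0=\pi^0_B(0,q)=\pi^0_B(1,q)$.

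The crux — and the step I expect to be the main obstacle — is to show $\phi(a_l)=\pi^0_B(a_l,q)$ for every $l$; this is the only place the nested structure is genuinely used. Write $a_l=p^*_{\mu_l}$ with $\mu_l=\max X_l$. For $i\le l$ one has $\mu_l\in X_l\subseteq X_i$, so $\mu_l$ is an atom of $q^{X_i}$ and a telescoping sum in the defining formula for $q^{X_i}$ gives $\P_{j\sim q^{X_i}}[\,p^*_j\ge a_l\,]=\tfrac{a_i(1-a_l)}{a_l(1-a_i)}$; for $i>l$ one has $X_i\subseteq X_l$, so every atom of $q^{X_i}$ has index $\le\mu_l$ and $\P_{j\sim q^{X_i}}[\,p^*_j\ge a_l\,]=1$. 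Substituting into $\pi^0_B(a_l,q)=(v_1-v_0)\,a_l\sum_i\lambda_i\,\P_{j\sim q^{X_i}}[\,p^*_j\ge a_l\,]$ and comparing with the affine expression for $\phi$ near $a_l$ yields the claimed equality. Therefore $\phi$ is a concave majorant of $\pi^0_B(\cdot,q)$ that is affine between consecutive members of $\{0,a_1,\dots,a_k,1\}$ and agrees with $\pi^0_B(\cdot,q)$ at each of them; on each such interval the concave hull of $\pi^0_B(\cdot,q)$ lies below $\phi$ (a concave majorant) and above the chord through the interval's endpoints, which coincides with $\phi$ there, so $\phi$ equals that concave hull, which is $\pi^1_B(\cdot,q)$ by Section~\ref{sec:pi_1}.

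Finally I would assemble the equilibrium claim. Voluntariness for Bob is the identity itself (his payoff under $M_q$ is $\pi^1_B(p,q)$, unchanged), and voluntariness for Sally holds because her payoff under $M_q$ equals $W^*(p,q)-\pi^1_B(p,q)\ge\pi^1_S(p,q)$, using the elementary bound $\pi^1_S(p,q)+\pi^1_B(p,q)\le W^*(p,q)$ (realized welfare never exceeds the efficient welfare). For optimality, any $M\in\VR_S^2(\D_S,\D_B)$ splitting $q$ into posteriors $q''$ gives Sally a continuation payoff $\E_{q''\sim M}[\pi^1_S(p,q'')]$ that is at most $W^*(p,q)-\pi^1_B(p,q)$: apply the welfare bound at each $q''$, use $\E_{q''\sim M}[W^*(p,q'')]=W^*(p,q)$ by linearity, and then invoke Bob's voluntariness $\E_{q''\sim M}[\pi^1_B(p,q'')]\ge\pi^1_B(p,q)$. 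Hence $\pi^2_S(p,q)\le W^*(p,q)-\pi^1_B(p,q)$, a bound $M_q$ attains with equality, so $M_q$ is a best response and $\pi^2_S(p,q)=W^*(p,q)-\pi^1_B(p,q)$. Since realized welfare is at most $W^*(p,q)$ while Bob is guaranteed at least $\pi^1_B(p,q)$, Bob's payoff under every optimal response of Sally equals $\pi^1_B(p,q)$ exactly; thus $\pi^2_B(p,q)=\pi^1_B(p,q)$ and $\pi^2_S(p,q)+\pi^2_B(p,q)=W^*(p,q)$, establishing \eqref{eq:buyer_binary_sally2} and therefore Theorem~\ref{thm:buyer_binary}.
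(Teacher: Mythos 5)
Your proof is correct, and it follows the paper's overall blueprint (decompose $q$ into nested indifference distributions, then show that Sally's refinement leaves Bob's payoff unchanged), but it proves the crucial identity $\sum_i \lambda_i \pi^1_B(p,q^{X_i}) = \pi^1_B(p,q)$ by a genuinely different argument. The paper works pointwise: it shows the easy inequality ($\ge$) by letting Bob reuse his optimal refinement against each $q^{X_i}$, and the hard inequality ($\le$) by identifying a specific sub-interval of $p$-values determined by the nesting on which every $\pi^1_B(\cdot,q^{X_i})$ is simultaneously affine, and then appealing to linearity. You instead characterize $\phi := \sum_i \lambda_i \pi^1_B(\cdot,q^{X_i})$ \emph{globally}: you derive the closed-form ``tent'' $\pi^1_B(p,q^X) = (v_1-v_0)\min\bigl(p,\ \tfrac{a_X}{1-a_X}(1-p)\bigr)$, verify via a telescoping computation that $\phi(a_l) = \pi^0_B(a_l,q)$ at every breakpoint $a_l$, and conclude that $\phi$ is a concave piecewise-linear majorant of $\pi^0_B(\cdot,q)$ that coincides with it at all kinks, hence equals its concave hull. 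This is cleaner and yields the identity for all $p$ at once, avoiding the paper's case analysis over which interval $p$ lands in. You also supply an explicit upper bound $\pi^2_S(p,q) \le W^*(p,q) - \pi^1_B(p,q)$ (via the welfare bound at each posterior plus linearity of $W^*$ and Bob's voluntariness constraint), and the argument pinning down $\pi^2_B(p,q) = \pi^1_B(p,q)$ under the paper's tie-breaking; the paper's text only establishes the matching lower bound on $\pi^2_S$ and leaves the converse direction implicit. One small note: the paper's displayed equation \eqref{eq:buyer_binary_sally2} reads $\pi^2_S = W^* - \pi^1_S$, but the proof and the intended efficiency conclusion make clear this should be $\pi^2_S = W^* - \pi^1_B$, which is the version you correctly prove.
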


\begin{proof}
Let's assume without loss of generality that $q$ has full support and that $p > p^*_n$. If $p \leq p^*_n$, then $\pi_S^0(p,q) + \pi_B^0(p,q) = W^*(p,q)$ since Sally always prices at $v_0$ so equation  \eqref{eq:buyer_binary_sally2} holds trivially.

Consider the decomposition of Sally's original distribution $q$ into nested indifference distributions in Lemma \ref{lemma:indifference_decomposition} $q = \sum_{i} \lambda_i q^{X_i}$. Then:
$$\begin{aligned}
\pi^2_S(p,q) & \geq \sum_i \lambda_i \pi^1_S(p,q^{X_i}) =
\sum_i \lambda_i [W^*(p,q^{X_i}) - \pi^1_B(p,q^{X_i})] = 
W^*(p,q) - \sum_i \lambda_i \pi^1_B(p,q^{X_i})
\end{aligned}$$
where the first equality comes from equation \eqref{eq:bob_response_indiff} and the second due to the fact that $W^*(p,q)$ is linear in $q$.

To complete the proof, we need to show that  $\sum_i \lambda_i \pi^1_B(p,q^{X_i}) = \pi^1_B(p,q)$. One of the directions is easy: if $M$ is the optimal refinement for Bob in $t=1$ then:
$$\sum_i \lambda_i \pi^1_B(p,q^{X_i}) \geq \sum_i \lambda_i \E_{p' \sim M}[\pi^0_B(p,q^{X_i})] = \E_{p' \sim M}[\pi^0_B(p,q)] = \pi^1_B(p,q) $$
where the second inequality follows from the fact that $\pi_B^0$ is linear in $q$. This inequality in particular, verifies that Sally's signal satisfies voluntary communication.

Showing the other direction is the delicate aspect of the proof and will use the fact that the sets $X_i$ are nested. Start by choosing the smallest $i$ such that $$\min_{x \in X_i} p^*_x > p$$
If no such index exists, set $i=k+1$ and $p^*_a = 1$. Otherwise, we know
by the assumption that $p > p^*_n$ and hence $i > 1$. Let $a = \max\{x; x \in X_i\}$ and $b = \min\{x; x \in X_{i-1}\}$.
By the definition we have that $p^*_b \leq p \leq p^*_a$. The crucial observation is that the payoff functions $p \mapsto \pi_B^1(p, q^{X_i})$ are linear for $p \in [p^*_b, p^*_a]$. The reason is that the $\pi_B^1(p, q^{X_i})$ is continuous and piecewise linear with two pieces with a breakpoint at $\min_{x \in X_i} p^*_x$. Hence the interval $[p^*_b, p^*_a]$ is always within the same linear piece. If $p = z_a p_a^* + z_b p_b^*$ then:
$$\begin{aligned}
\sum_i \lambda_i \pi^1_B(p,q^{X_i}) = \sum_i \lambda_i [z_a \pi^0_B(p^*_a,q^{X_i}) + z_b \pi^0_B(p^*_b,q^{X_i}) ] 
 =  z_a \pi^0_B(p^*_a,q) + z_b \pi^0_B(p^*_b,q) \leq \pi^1_B(p,q)
\end{aligned}$$
The first equality follows from the observation about the linearity of $\pi_B^1(p, q^{X_i})$ in $[p^*_b, p^*_a]$, the second follows from the linearity of $\pi_B^0(p,q)$ on $q$ and the third is simply because the refining $p$ to $p^*_b$ and $p^*_a$ is a valid strategy for Bob.
\end{proof}

\input{spy_figs}
\section{A Spy Game (with Longer Communication)} \label{sec:spy}

We conclude by showing an example of a game with a longer communication protocol. In the tradition of Cold War era game theory we will have Sally (the Spymaster) and Bob (the Birdwatcher) be spies. As with good spies, no one really knows which country they are actually serving. Their types $\theta_S, \theta_B \in \{0,1\}$ represent the country they are serving. If $\theta_S = \theta_B$ then Sally and Bob are friends and should cooperate. If $\theta_S \neq \theta_B$ they are enemies and they have opposing goals. In their meeting in West Berlin (Glienicke Bridge, more precisely), Sally must decide either to cooperate with Bob (C) or to expose him (E). The actions and types lead to the following utilities for Sally and Bob respectively in the base game:

\begin{center}
\begin{tabular}{ c|c|c } 
  & C & E  \\ 
  \hline
  $\theta_S = \theta_B$ & $(1,1)$ & $(-1,-1)$  \\ 
  $\theta_S \neq \theta_B$ & $(-2,2)$ & $(2,-2)$  \\ 
\end{tabular}
\end{center}

If Sally and Bob are friends ($\theta_S = \theta_B$) then they are playing a cooperative game: both get a payoff of $1$ for cooperating and both get $-1$ if Sally decides to expose Bob. If Sally and Bob are enemies ($\theta_S \neq \theta_B$) they are playing a zero-sum game. If Sally cooperates with an enemy, she loses ($-2$) and Bob wins ($+2$). If she exposes an enemy then Sally wins ($+2$) and Bob loses ($-2$).

We analyze this game by following the same methodology as in Section \ref{sec:binary_types} of taking alternate concave hulls in Bob's and Sally's dimensions (see implementation in footnote 1). Even though it is a game with binary types and only two actions, the communication protocol arising in equilibrium is quite complex. For example, the following table shows the payoffs of Sally and Bob from a protocol with $t$ periods if they start from uniform distributions:

\begin{center}
\begin{tabular}{ c|c|c|c|c|c|c|c } 
  & $t=0$ & $t=1$ & $t=2$ & $t=3$ & $t=4$ & $t=5$ & $t=6$ \\ 
  \hline
  S & 0.5 & 0.5 & 0.722 & 0.722 & 0.738 & 0.769 & 0.801 \\ 
  B & -1.5 & -0.166 & -0.166 & -0.107 & -0.107 & -0.077 & -0.075  \\ 
\end{tabular}
\end{center}

In Figures \ref{fig:spy_br_1} and \ref{fig:spy_br_2} we depict the optimal protocol using the square diagrams similar to the ones introduced in Figure \ref{fig:two_types_example}. The coordinates of each point $(p,q) \in [0,1]^2$ represent the probabilities that Bob ($p$) and Sally ($q$) have type $1$.

We note that the optimal strategy can be represented simply by a partition of the square $[0,1]^2$ into sub-rectangles. Assume we are in state $(p,q)$ in round $t$ and its inside a sub-rectangle $[p_0, p_1] \times [q_0, q_1]$. If $t$ is odd and Bob will refine $(p,q)$ to $(p_0,q)$ and $(p_1,q)$. If $t$ is even, Sally  will refine $(p,q)$ to $(p, q_0)$ and $(p,q_1)$.

At $t=0$ Sally's action varies whether her belief is in the intervals $[0,1/3]$, $(1/3,2/3)$ or $[2/3, 1]$. In the middle interval she always exposed Bob and in the first and third interval she will cooperate if Bob's likeliest type is the same as her type. 

\SpyBestResponse
\SpyBestResponseTwo

What follows next is a delicate game where agents slowly reveal their types depending on how much information they have about other agents. For example, Bob's strategy in $t=1$ will be one of the following:
\begin{itemize}
    \item If Bob has a more extreme information about Sally ($q \leq 4/9$ or $q \geq 5/9$) he will reveal more information about his type hoping to cooperate more frequently.
    \item If Bob is more uncertain about Sally's type ($4/9 < q < 5/9$) Bob will reveal less information since he runs the risk of Sally having the opposite type and deciding to expose him when she would otherwise cooperate.
\end{itemize}
In $t=2$ Sally is again in the same situation: the more certain she is about Bob's type the more willing she is to tell her own type.
 Figure \ref{fig:spy_br_2} shows the evolution of the game in each of the subsequent stages. 

\SpyInfoPathFive

Figure \ref{fig:spy_dynamic_5} shows how the state $(p,q)$ of the game evolves starting from $(1/2,1/2)$ for $5$ iterations. What we observe is that a long and gradual disclose of information benefits both parties.

%\renato{Maybe leave the next sentence for the camera ready?}
%We conclude with an open question: does the spy game (or a variant thereof) has infinite message complexity?

\bibliographystyle{plain}
% \footnotesize
\bibliography{info_design}
% \normalsize

\appendix
\section{Missing proofs for generic protocols}
\label{sec:reductions}
We show here that it's sufficient to only consider alternating information refinement protocols (defined in Section \ref{subsubsec:msg_struct}) among a more general set of protocols.

\subsection{Alternating refinement vs non-alternating refinement}

We first define a more general version of the alternating information refinement protocols which does not require alternating between Sally and Bob, and we call them information refinement protocols. More specifically, in an information refinement protocol, in each round $t$, both parties are allowed to refine information, i.e.
$$\begin{aligned}
 & & \D_B^{t-1} \sim m_B^t(\D_S^t, \D_B^t) & ,\quad  \D_S^{t-1} \sim m_S^t(\D_S^t, \D_B^t).
\end{aligned}$$

 We show that it has the same utilities as an alternating one in which each party uses an additional input of the other party's previous round: i.e. $\D_B^{t-1} \sim m_B^t(\D_S^t, \D_B^t, \D_S^{t+1})$ or $\D_S^{t-1} \sim m_S^t(\D_S^t, \D_B^t, \D_B^{t+1})$. Notice that in an equilibrium, this additional input of the previous round can be dropped.
\begin{lemma}
For any (not necessarily alternating) information refinement protocol $P$, there exists an alternating information refinement protocol $P'$ with the same utilities.
\end{lemma}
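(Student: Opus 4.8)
The plan is to prove the lemma by a \emph{time-dilation} construction: replace each round of the (possibly simultaneous) protocol $P$ by two consecutive rounds of an alternating protocol $P'$, so that a $k$-round $P$ becomes a $2k$-round $P'$. Concretely, round $t$ of $P$ — in which both $m_B^t$ and $m_S^t$ act on the same state $(\D_S^t,\D_B^t)$, with the two refinements drawn independently — is unpacked into round $2t$ of $P'$, in which Sally plays $m_S^t$, followed by round $2t-1$ of $P'$, in which Bob plays $m_B^t$. Note that rounds $2t$ are even (Sally's turn) and rounds $2t-1$ are odd (Bob's turn), and we run $t=k,k-1,\dots,1$, which visits rounds $2k,2k-1,\dots,1$ in order, as required. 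The reason for scheduling Sally first within each super-round is that when Sally moves at round $2t$ the state is still exactly $(\D_S^t,\D_B^t)$ — nothing has yet happened in super-round $t$ — so $m_S^t$ can be copied verbatim as a legal alternating message.

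For correctness I would set up a coupling between the executions of $P$ and $P'$ and prove, by induction on $t$ from $k$ down to $1$, that the state of $P'$ at the start of round $2t$ equals the state $(\D_S^t,\D_B^t)$ of $P$ at the start of round $t$. For the inductive step, assuming this at round $2t$: Sally's move in $P'$ yields $\D_S^{(2t-1)}\sim m_S^t(\D_S^t,\D_B^t)$ and leaves $\D_B^{(2t-1)}=\D_B^t$; Bob's move then yields $\D_B^{(2t-2)}\sim m_B^t(\D_S^t,\D_B^t)$, drawn independently of Sally's draw, and leaves $\D_S^{(2t-2)}=\D_S^{(2t-1)}$. Hence the joint law of $(\D_S^{(2t-2)},\D_B^{(2t-2)})$ in $P'$ is exactly the joint law of $(\D_S^{t-1},\D_B^{t-1})$ in $P$, closing the induction. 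Pushing this down to $t=1$ shows the terminal-state distributions coincide, so Sally's optimal action $a_S^*(\theta_S;\D_B^0)$ and therefore both payoffs $\pi_B,\pi_S$ agree; running the argument on the $t$-round prefix of $P$ and the $2t$-round prefix of $P'$ gives agreement of $\pi_i^t$ for every $t$, not just the full protocol.

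The delicate point — which I expect to be the main obstacle — is the legality of Bob's message at round $2t-1$. When Bob moves there, Sally has already refined $\D_S$, so the current state is $(\D_S^{(2t-1)},\D_B^t)$ rather than $(\D_S^t,\D_B^t)$, and in general $\D_S^t$ cannot be recovered from $\D_S^{(2t-1)}$. This is exactly why $P'$ is, a priori, an alternating protocol in which each mover also receives the other party's pre-refinement distribution from the immediately preceding round, i.e. $m_B^t(\D_S^t,\D_B^t,\D_S^{t+1})$ and $m_S^t(\D_S^t,\D_B^t,\D_B^{t+1})$; with this extra argument the construction above goes through literally. The remaining work is to argue the extra argument is dispensable. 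I would combine (i) the trivial inclusion that any alternating protocol is the special case of a simultaneous one in which one party stays silent each round, which sandwiches the equilibrium value of the simultaneous model between (copies of) the alternating value, with (ii) the observation that in the tie-broken equilibrium the optimal message never needs to condition on the predecessor's pre-refinement state: Sally's equilibrium move at round $2t$ is already a best response anticipating Bob's move at round $2t-1$, so Bob may be taken to play the best response to the \emph{current} state without loss. Making this last step fully rigorous — that conditioning on $\D_S^{t+1}$ (resp. $\D_B^{t+1}$) is redundant in equilibrium — is the part that requires the most care, and is where I would concentrate the detailed argument.
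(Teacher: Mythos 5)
Your proposal is correct in spirit and takes a genuinely different route from the paper, although both approaches bottom out at the same delicate point. The paper does \emph{not} time-dilate: it keeps the round count equal by a ``catching up'' construction in which the active player in round $t$ of $P'$ simulates, in a single round, the two consecutive messages that player would have sent in rounds $t+1$ and $t$ of $P$ (with the inactive player staying silent), and then proves equivalence by a hybrid argument that interpolates between $P$ and $P'$ one round at a time. Your time-dilation is conceptually simpler — each simultaneous round of $P$ becomes two alternating rounds of $P'$, with Sally going first at round $2t$ (so she acts on the unrefined state $(\D_S^t,\D_B^t)$) and Bob going second at round $2t-1$ — at the cost of doubling the round count; this still meets the lemma's literal statement (which does not promise round preservation), but it costs a factor of two if one wants to chain the reduction with the rest of Appendix A while keeping track of the number of rounds. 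Crucially, you have correctly identified the genuine obstruction: Bob at round $2t-1$ no longer sees $\D_S^t$, only Sally's already-refined posterior, so $m_B^t$ is not a legal alternating message as is. The paper faces the very same issue — their $n_B^t$ and $n_S^t$ explicitly take the predecessor's pre-refinement state as a third argument — and, like you, they dispose of it with the unproved remark that ``in an equilibrium, this additional input of the previous round can be dropped.'' So your step (ii) is not a new gap you've introduced; it is exactly the same assertion the paper leaves at the level of a remark, and your coupling induction plays the role of their hybrid argument. If you want to tighten (ii), the cleanest route is the one the paper is implicitly gesturing at: the equilibrium defined by equations \eqref{eq:voluntary_refinement}--\eqref{eq:equilibrium} is Markovian in the state $(\D_S,\D_B)$ — each message is a greedy best response to the \emph{current} posteriors with a fixed tie-breaking rule — so no equilibrium message can do better by conditioning on the stale third argument, and replacing $n_B^t(d_S,d_B,d'_S)$ by the state-only best response weakly improves Bob and hence (by the lexicographic tie-breaking) preserves all equilibrium utilities.
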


\begin{proof}
Let the refinement of $P$ in each round $t$ to be $m_B^t(d_S, d_B)$ and $m_S^t(d_S, d_B)$ for any $d_S, d_B$. We define the refinement of $P'$ to be $n_S^t$ and $n_B^t$ (for notational convenience, $m_S^{T+1}(d_S,d_B) = d_S$ and $m_B^{T+1}(d_S,d_B) = d_B$): %\jieming{define $m_B^t$ when inputs are distributions of distributions}:
\begin{itemize}
\item When $t$ is odd, $n_S^t(d_S,d_B,d'_B) = d_S$, and $n_B^t(d_S,d_B,d'_S)= m_B^{t}(d_S,m_B^{t+1}(d'_S,d_B))$ for any $d_S, d_B, d'_S, d'_B$.
\item When $t$ is even, $n_S^t(d_S,d_B,d'_B) = m_S^{t}(m_S^{t+1}(d_S,d_B'),d_B)$, and $n_B^t(d_S,d_B,d'_S)= d_B$ for any $d_S, d_B$. 
\end{itemize}
Clearly, $P'$ is alternating. We prove the equivalence between utilities by a hybrid argument. Consider refinement protocol $P_{\tau}$ to be the one such that 
\begin{itemize}
    \item For round $t <\tau$, use refinement $n_S^t$ and $n_B^t$.
    \item For round $t = \tau$, if $\tau$ is odd use $n_S^t$ and $m_B^t$. Otherwise, use $m_S^t$ and $n_B^t$.
    \item For round $t >\tau$, use refinement $m_S^t$ and $m_B^t$.
\end{itemize}
It's easy to see that $P_0 = P$ and $P_{T+1} = P'$. Now it suffices to show that $P_t$ and $P_{t+1}$ have the same utilities for $t = 0,\ldots, T+1$.
\begin{itemize}
    \item For $P_0$ versus $P_1$, it's easy to see that the last message sent by Sally ($m^1_S$) in $P_0$ does not affect the action taken by Sally, and therefore utilities are not affected.
    \item For $P_t$ versus $P_{t+1}$ with odd $t$, they only differ in rounds $t$ and $t+1$. $P_t$ has refinements $m_S^{t+1}$ and $m_B^{t+1}$ in round $t$, and refinements $n_S^t$ and $m_B^t$ in round $t$. $P_{t+1}$ has refinements $m_S^{t+1}$ and $n_B^{t+1}$ in round $t+1$, and refinements $n_S^t$ and $n_B^t$ in round $t$. Notice that $n_S^t$ and $n_B^{t+1}$ are identities and $n_B^t(d_S,d_B,d'_S)= m_B^{t}(d_S,m_B^{t+1}(d'_S,d_B))$ . Therefore, for any $d_S$ and $d_B$, going through the refinement process of rounds $t,t+1$ in both protocols will get the same outcome. Thus $P_t$ and $P_{t+1}$ have the same utilities.
    \item For $P_t$ versus $P_{t+1}$ with even $t >0$, the argument is symmetric to the previous case. \qedhere
\end{itemize}
\end{proof}

\subsection{Non-alternating refinement vs generic protocols}
\label{sec:generic}
Now we define generic protocols. Consider a generic protocol $P'$ of $k$ rounds. Before the first round, both parties Sally and Bob get their inputs sampled, i.e. $\theta_S \sim \D_S$ and $\theta_B \sim \D_B$. They also have the public randomness $R^{pub}$ and private randomness $R^{priv}_S$ and $R^{priv}_B$ sampled. In each round $t \in [k]$, 
\begin{itemize}
\item Sally sends message $g_S^t = g_S^t(h_{t-1}, \theta_S, R^{pub}, R^{priv}_S)$,
\item and simultaneously Bob sends message $g_B^t = g_B^t(h_{t-1}, \theta_B, R^{pub}, R^{priv}_B)$.
\end{itemize}
Here the history $h_t$ of first $t$ rounds is defined as the concatenation of messages sent in first $t$ rounds: $h_t = (g_S^1, g_B^1, g_S^2, g_B^2,\ldots, g_S^t, g_B^t)$. We define $\D_S^{t,g}$ as the distribution of $\theta_S$ given Bob's information after first $t$ round including history $h_t$, Bob's input $\theta_B$, Bob's private randomness $R^{priv}_B$ and public randomness $R^{pub}$: 
\[
\D_S^{t,g} = \D_S \mid h_t, \theta_B,  R^{priv}_B,  R^{pub}.
\]
And similarly we define $\D_B^{t,g}$ as following:
\[
\D_B^{t,g} = \D_B \mid h_t, \theta_S,  R^{priv}_S,  R^{pub}.
\]

The utilities of Sally and Bob in the generic protocol after $t$ rounds is defined as
\[
\pi^{t,g}_i(\D_S, \D_B) = \E_{\theta_S \sim \D_S, \theta_B \sim \D_B, R^{priv}_S, R^{priv}_B, R^{pub}}[u_i(a_S^*(\theta_S; \D_B^{t,g} ), \theta_S, \theta_B)] \quad i \in \{B, S\}.
\]
And it's easy to check that it is related to the utilities we defined earlier in the following way:
\[
\pi^{t,g}_i(\D_S, \D_B)   = \E[ \pi_i^0 (\D_S^{t,g}, \D_B^{t,g})].
\]
\begin{lemma}
\label{lem:generic}
For the generic protocol $P'$ described above, there exists a information refinement protocol $P$ such that Sally and Bob have the same utilities in both protocols, i.e.
\[
\pi^{k,g}_i(\D_S, \D_B) =\pi^k_i(\D_S, \D_B)  \quad i \in \{B, S\}.
\]
\end{lemma}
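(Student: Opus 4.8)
The plan is to build the desired information refinement protocol $P$ by simulating $P'$ round by round, exploiting the identity recorded in the excerpt, $\pi^{k,g}_i(\D_S,\D_B) = \E[\pi_i^0(\D_S^{k,g},\D_B^{k,g})]$, which says that all that matters is the joint law of the final pair of posteriors. The one conceptual hurdle is that in an information refinement protocol the two parties' beliefs are \emph{common knowledge} (every realized signal is truthfully announced), whereas the definitions of $\D_S^{t,g}$ and $\D_B^{t,g}$ also condition on the \emph{other} party's private input; so the first task is to show this extra conditioning is vacuous.

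First I would establish the \emph{rectangle property}: once the public randomness $R^{pub}$ is fixed, the map $\big((\theta_S,R^{priv}_S),(\theta_B,R^{priv}_B)\big)\mapsto h_t$ is a deterministic two-party communication protocol, so the preimage of every transcript $h_t$ is a combinatorial rectangle. Hence, conditioned on $(h_t,R^{pub})$, the pairs $(\theta_S,R^{priv}_S)$ and $(\theta_B,R^{priv}_B)$ remain independent, and therefore $\D_S^{t,g}=\D_S\mid h_t,R^{pub}$ and $\D_B^{t,g}=\D_B\mid h_t,R^{pub}$ — each is a deterministic function of the public transcript together with the public coins, hence genuinely common knowledge. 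This is exactly what makes a belief-only refinement protocol capable of tracking the generic protocol.

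Next I would construct $P$ (as a not-necessarily-alternating refinement protocol; combining with the reduction from non-alternating to alternating refinement protocols established above then yields an alternating one). Prepend a single round in which the first speaker reveals a fresh independent draw of $R^{pub}$; this is a trivial refinement of that party's prior (it averages back to the prior, so $\mu$ is respected) and makes $R^{pub}$ part of the common history from then on. Then, for $\tau=1,\dots,k$, in round $\tau$ of $P$ Bob reveals $g_B^\tau$ as his signal about $\theta_B$ and Sally reveals $g_S^\tau$ as her signal about $\theta_S$, with $g_S^\tau,g_B^\tau$ computed exactly as in $P'$ from the common history, the respective private type and private coins, and $R^{pub}$. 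Each is a valid information refinement: by the tower property, conditioned on the common state $(h_{\tau-1},R^{pub})$, the expectation of $\D_B\mid h_\tau,R^{pub}$ equals $\D_B\mid h_{\tau-1},R^{pub}$, i.e. the current posterior, and symmetrically for Sally; and since $R^{pub}$ is already public, the two signals of round $\tau$ are conditionally independent given the common state, so they fit the information refinement model with no need for within-round correlation.

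Finally, a straightforward induction on $\tau$ shows that after round $\tau$ of $P$ the joint law of (the public transcript of $P$, $\theta_S$, $\theta_B$) equals the joint law of $\big((R^{pub},h_\tau),\theta_S,\theta_B\big)$ under $P'$; in particular, after round $k$ the realized posterior pair of $P$ is distributed exactly as $(\D_S^{k,g},\D_B^{k,g})$, jointly with the types. Substituting into the definition of $\pi^0_i$ and using the identity quoted above gives $\pi^{k}_i(\D_S,\D_B)=\E[\pi_i^0(\D_S^{k,g},\D_B^{k,g})]=\pi^{k,g}_i(\D_S,\D_B)$ for $i\in\{B,S\}$, which is the claim. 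The main obstacle is the first step: one must be sure the information refinement framework, whose state is public, does not lose any of the information structure the generic protocol can produce. Once the rectangle property pins the posteriors down as functions of the public transcript, the round-by-round simulation and the remaining martingale/splitting-lemma verifications are routine.
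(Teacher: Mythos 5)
Your rectangle-property observation is correct and makes explicit a structural fact that the paper only partially states (the paper drops $\theta_j$ from the conditioning but keeps $R_j^{priv}$; the full rectangle argument shows $\D_S^{t,g}$ and $\D_B^{t,g}$ are determined by $(h_t, R^{pub})$ alone). The genuine gap is in the construction step: the protocol $P$ you build is not an information refinement protocol in the sense the lemma requires. By the definitions in Section~\ref{subsubsec:msg_struct} and the opening of Appendix~\ref{sec:reductions}, the message map $m_i^t$ of a (possibly non-alternating) refinement protocol is a function of the current belief pair $(\D_S^t,\D_B^t)$ only; it retains no memory of how that belief pair was reached. Your round-$\tau$ signal is computed ``from the common history, the respective private type and private coins, and $R^{pub}$,'' i.e.\ from the entire transcript and the public coins, neither of which is part of the belief-pair state. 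The ``prepend a round revealing $R^{pub}$'' device does not repair this: $R^{pub}$ is independent of both types, so revealing it leaves both beliefs $\D_S$ and $\D_B$ unchanged, and hence the realized $R^{pub}$ is not stored anywhere in the refinement protocol's state and is invisible to every subsequent $m_i^t$. What you have actually constructed is the original generic protocol rewritten in posterior language, not a refinement protocol in the class over which $\pi^k_i$ is defined.

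The paper's proof handles exactly this obstacle by defining the message law at a belief pair $(d_S,d_B)$ as the conditional law $\Pr[\D_i^{t,g}=\cdot \mid \D_S^{t-1,g}=d_S,\ \D_B^{t-1,g}=d_B]$, which averages over all transcripts and realizations of $R^{pub}$ consistent with that belief pair and therefore is genuinely a function of $(d_S,d_B)$ alone. The real content of the lemma is then the induction showing that this averaged, belief-pair-Markov chain reproduces the distribution of the terminal posteriors $(\D_S^{k,g},\D_B^{k,g})$. Your proposal skips the averaging, so the object you build does not live in the required class and the claimed equality of utilities does not yet follow; the rectangle property and the splitting-lemma check you describe are the easy part, while the interface-matching step you treat as bookkeeping is where the work lies.
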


\begin{proof}
For any generic protocol $P'$, we construct $P$ by defining its messages $m_B$ and $m_S$. For each round $t$, and for any $d_S,d'_S \in \Delta(\Theta_S)$, $d_B,d'_B \in \Delta(\Theta_B)$ we set
\[
\Pr[m^{k-t+1}_i(d_S,d_B) = d'_i] = \Pr[\D_i^{t,g} = d'_i \mid \D_S^{t-1,g} = d_S, \D_B^{t-1,g} = d_B]\quad i \in \{B, S\}.
\]
Notice that the index $k-t+1$ comes from the fact that the information refinement protocol is defined in the reversed order of time.

We first check each step of $P$ is an information refinement, i.e. $\sum_{d'_i \in \Delta(\Theta_i)}  \Pr[m^{k-t+1}_i(d_S,d_B) = d'_i] \cdot d'_i = d_i$. For notational convenience, set $j \in \{B,S\} \setminus \{i\}$. Notice that $D_i^{t,g}$ equals to the distribution $\D_i$ conditioned on $h_t, R^{pub}, R_j^{priv}$, i,e. $\D_i^{t,g} = \D_i \mid h_t, R^{pub}, R_j^{priv}$. We have 
\begin{align*}
&\sum_{d'_i \in \Delta(\Theta_i)}  \Pr[m^{k-t+1}_i(d_S,d_B) = d'_i] \cdot d'_i \\
=& \sum_{d'_i \in \Delta(\Theta_i)} \Pr[\D_i^{t,g} = d'_i \mid \D_S^{t-1,g} = d_S, \D_B^{t-1,g} = d_B] \cdot d'_i \\
=& \sum_{h_t, R^{pub}, R_j^{priv}} \Pr[h_t, R^{pub}, R_j^{priv} \mid \D_S^{t-1,g} = d_S, \D_B^{t-1,g} = d_B] \cdot (D_i \mid h_t, R^{pub}, R_j^{priv})   \\
=& \E[D_i \mid \D_S^{t-1,g} = d_S, \D_B^{t-1,g} = d_B]\\
=& d_i
\end{align*}

By how we define the messages of $P$, it's easy to show by induction on $t = 1,\ldots,k$ that for any  $d_S \in \Delta(\Theta_S)$, $d_B \in \Delta(\Theta_B)$,
\[
\Pr[\D_S^{k-t} = d_S, \D_B^{k-t} = d_B] = \Pr[\D_S^{t,g} = d_S, \D_B^{t,g} = d_B].
\]

Finally, since
\[
\pi^{k,g}_i(\D_S, \D_B) = \E[ \pi_i^0 (\D_S^{k,g}, \D_B^{k,g})]
\]
and 
\[
\pi^{k}_i(\D_S, \D_B) = \E[ \pi_i^0 (\D_S^{0}, \D_B^{0})].
\]
We know $\pi^{k,g}_i(\D_S, \D_B)$ and $\pi^{k}_i(\D_S, \D_B)$ are equal.
\end{proof}

\begin{lemma}
For any information refine protocol $P$, there exists a generic protocol $P'$ with the same utilities.
\end{lemma}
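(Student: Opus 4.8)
The plan is to argue that an information refinement protocol is essentially already a generic protocol in disguise: each party can reproduce any information‑refinement message by using its private randomness to draw a signal realization according to the signal~$\leftrightarrow$~refinement correspondence of Section~\ref{subsec:refinement}, and then broadcasting that realization. Concretely, I would take the given $k$‑round protocol $P$ with messages $m_B^t, m_S^t$ (indexed in the last‑to‑first convention $t=k,\dots,1$) and build a $k$‑round generic protocol $P'$ running forward in time $\tau=1,\dots,k$, identifying round $\tau$ of $P'$ with round $t=k-\tau+1$ of $P$. No public randomness is needed, and the messages will simply be the realized posterior distributions, so $g_S^\tau\in\Delta(\Theta_S)$ and $g_B^\tau\in\Delta(\Theta_B)$; since generic protocols permit simultaneous messages, this also handles the non‑alternating refinement of both parties in a single round.

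For round $\tau$ of $P'$: from the public history $h_{\tau-1}$ each party reads off the pair $(d_S,d_B)$ formed by the two most recent messages (taking $(d_S,d_B)=(\D_S,\D_B)$ when $\tau=1$). Sally computes the refinement $M_S=m_S^{t}(d_S,d_B)$; by the correspondence of Section~\ref{subsec:refinement}, sampling $Y\sim M_S$ and then $X\sim Y$ gives a joint law with $X\sim d_S$ and with the conditional law of $X$ given $Y=d'$ equal to $d'$. Since Sally knows her true type $\theta_S$ (which lies in $\mathrm{supp}(d_S)$ by the induction below), she uses $R_S^{priv}$ to draw $g_S^\tau=Y$ from the conditional law of $Y$ given $X=\theta_S$. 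Bob acts symmetrically using $m_B^{t}$, $\theta_B$, $R_B^{priv}$.

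The verification is an induction on $\tau$ establishing two coupled claims: (i) conditional on $h_\tau$ the two types remain independent and each party's posterior about the other equals the other party's most recent message — in particular $\D_S^{\tau,g}=g_S^\tau$ and $\D_B^{\tau,g}=g_B^\tau$ as random variables; and (ii) $\Pr[\D_S^{\tau,g}=d_S,\D_B^{\tau,g}=d_B]$ in $P'$ equals $\Pr[\D_S^{k-\tau}=d_S,\D_B^{k-\tau}=d_B]$ in $P$. The key points of the step: Sally's round‑$\tau$ message depends only on $(\theta_S,R_S^{priv},h_{\tau-1})$ and Bob's only on $(\theta_B,R_B^{priv},h_{\tau-1})$, so conditioning on the new messages preserves the independence in (i); and $g_B^\tau$ was drawn precisely so that the Bayesian posterior of $\theta_B$ after observing it, starting from the inductively correct prior $d_B$, is $g_B^\tau$ itself — hence $\D_B^{\tau,g}=g_B^\tau$, whose conditional law given $(d_S,d_B)$ is exactly $m_B^{t}(d_S,d_B)$, matching $\D_B^{t-1}\sim m_B^t(\D_S^t,\D_B^t)$ with $t-1=k-\tau$ in $P$ (and symmetrically for Sally). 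With (ii) at $\tau=k$ in hand I would conclude, using $\pi_i^{k,g}(\D_S,\D_B)=\E[\pi_i^0(\D_S^{k,g},\D_B^{k,g})]$ and $\pi_i^{k}(\D_S,\D_B)=\E[\pi_i^0(\D_S^{0},\D_B^{0})]$, that $\pi_i^{k,g}=\pi_i^k$ for $i\in\{B,S\}$.

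The hard part is the inductive bookkeeping in (i)–(ii): showing that the ``broadcast your posterior'' messaging keeps the belief process Markovian and the two parties' beliefs independent, so that the round‑by‑round Bayesian updates of the generic protocol coincide exactly with the refinement steps of $P$. Everything else — the base case $\tau=0$, the support claim, and the final payoff identity — is just matching definitions.
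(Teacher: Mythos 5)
Your proposal is correct and follows essentially the same route as the paper. Your prescription "draw $g_S^\tau = Y$ from the conditional law of $Y$ given $X=\theta_S$, where $Y\sim M_S$ and $X\sim Y$" is exactly the explicit Bayes formula the paper writes for $\Pr_{R_i^{priv}}[g_i^t = d_i']$, and your inductive claims (i)–(ii) are the same "$\D_i^{t,g}=g_i^t$, hence the joint law of beliefs in $P'$ matches that of $P$, hence payoffs match" argument, with the paper outsourcing the final step to its Lemma on the reverse direction.
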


\begin{proof}
We construct the generic protocol $P'$ from $P$ by defining its message functions $g^t_S$ and $g^t_B$. $P'$ does not use public randomness and it only uses private randomness. 

For $i \in \{B,S\}$ and any round $t = 1,\ldots,k$, message $g_i^t$ is in the space $\Delta(\Theta_i)$ and is set such that for any $d'_i \in \Delta(\Theta_i)$,
\[ \Pr_{R^{priv}_i} [g_i^t(h_{t-1},\theta_i, R_i^{priv}) = d_i'] = \frac{\Pr[m^{k-t+1}_i(g^{t-1}_S,g^{t-1}_B) = d'_i] \cdot d'_i(\theta_i)}{\sum_{d''_i \in \Delta(\Theta_i)} \Pr[m^{k-t+1}_i(g^{t-1}_S,g^{t-1}_B) = d''_i] \cdot d''_i(\theta_i)}.\]
Here $g_i^0$ is set to be $\D_i$ for notational convenience and $d'_i(\theta_i)$ means the probability of $\theta_i$ in distribution $d'_i$. With this definition, it's easy to check that $\D_i^{t,g} = g_i^t$. Then we get that for each round $t$, and for any $d_S,d'_S \in \Delta(\Theta_S)$, $d_B,d'_B \in \Delta(\Theta_B)$ we set
\[
\Pr[m^{k-t+1}_i(d_S,d_B) = d'_i] = \Pr[\D_i^{t,g} = d'_i \mid \D_S^{t-1,g} = d_S, \D_B^{t-1,g} = d_B]\quad i \in \{B, S\}.
\]
Notice this is exactly the same as what we set in the beginning of the proof of Lemma \ref{lem:generic}. The equivalence between utilities will just follow from the same proof as in Lemma \ref{lem:generic}.
\end{proof}

\section{Voluntary communication with binary types}\label{sec:voluntary_discussion}

We show in this section that the voluntary communication requirement is redundant in the equilibrium definition when both parties have binary type spaces ($\abs{\Theta_B} = \abs{\Theta_S} = 2$). This is a simple corollary of the following theorem: Since Bob's utility is convex in Sally's probability $q$, Sally's refinement cannot hurt Bob. 

\begin{theorem}
\label{thm:no_vol}
If the base game has $\pi^0_B(p,q)$ convex in $q$ and $\pi^0_S(p,q)$ convex in $p$. then $\pi^t_B(p,q)$ is convex in $q$ and $\pi^t_S(p,q)$ is convex in $p$, for any $t > 0$.
\end{theorem}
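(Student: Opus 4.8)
The plan is to prove both invariants --- $\pi^t_B(p,q)$ convex in $q$ and $\pi^t_S(p,q)$ convex in $p$ --- simultaneously by induction on $t$, the base case $t=0$ being the hypothesis. For the inductive step I treat an odd $t$ (Bob refines the $p$-coordinate); the even case follows verbatim after exchanging $(B,p)\leftrightarrow(S,q)$. With binary types a refinement of Bob's information is just a distribution $M$ on $[0,1]$ with mean $p$, and by the equilibrium definition $\pi^t_B(\cdot,q)$ is the concave hull of $\pi^{t-1}_B(\cdot,q)$ along the $p$-axis, with Bob breaking ties in Sally's favor. A preliminary observation is that the voluntary-communication constraint is free at Bob's optimum: the Sally-side constraint $\E_M[\pi^{t-1}_S(\cdot,q)]\ge \pi^{t-1}_S(p,q)$ holds for \emph{every} mean-$p$ distribution $M$ by Jensen, since $\pi^{t-1}_S(\cdot,q)$ is convex in $p$ by the induction hypothesis; and the Bob-side constraint $\E_M[\pi^{t-1}_B(\cdot,q)]\ge\pi^{t-1}_B(p,q)$ is satisfied by $\delta_p$ and by every $M$ beating it, so $\pi^t_B(p,q)$ equals the unconstrained concave-hull value $\max_{\mu(M)=p}\E_M[\pi^{t-1}_B(\cdot,q)]$ and Bob's optimal $M^*$ may be taken to be an unconstrained maximizer (then tie-broken to maximize $\E_{M^*}[\pi^{t-1}_S(\cdot,q)]$).

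Given this, ``$\pi^t_B$ convex in $q$'' is a one-line Jensen argument: for $q=\alpha q_1+(1-\alpha)q_2$ and $M^*$ an unconstrained maximizer at $(p,q)$,
$$\pi^t_B(p,q)=\E_{M^*}[\pi^{t-1}_B(\cdot,q)]\le \alpha\,\E_{M^*}[\pi^{t-1}_B(\cdot,q_1)]+(1-\alpha)\,\E_{M^*}[\pi^{t-1}_B(\cdot,q_2)]\le \alpha\,\pi^t_B(p,q_1)+(1-\alpha)\,\pi^t_B(p,q_2),$$
the first inequality being the induction hypothesis applied pointwise inside the expectation, and the second holding because $M^*$ is a feasible mean-$p$ refinement at $(p,q_1)$ and at $(p,q_2)$, whose values are the respective concave hulls.

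The crux is ``$\pi^t_S$ convex in $p$'', where the Jensen trick breaks down (a mixture of Bob's optimal refinements at $p_1,p_2$ is mean-$p$ but suboptimal for Bob, since $\pi^t_B(\cdot,q)$ is concave in $p$). Instead I use the geometry of the concave hull together with the tie-break. Write $g=\pi^{t-1}_B(\cdot,q)$ and $\hat g=\mathrm{cav}_p\,g$. One shows that the mean-$p$ distributions $M$ with $\E_M[g]=\hat g(p)$ are exactly those supported on $\{g=\hat g\}\cap[a,b]$, where $[a,b]$ is the maximal interval containing $p$ on which $\hat g$ is affine (so $[a,b]=\{p\}$ when no affine piece of $\hat g$ has $p$ in its interior): support inside $\{g=\hat g\}$ forces $\E_M[g]=\E_M[\hat g]$, and then $\E_M[\hat g]=\hat g(p)$ forces $\hat g$ to be affine on $\mathrm{conv}(\mathrm{supp}\,M)$, an interval containing $p$. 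Among these refinements Bob's tie-break maximizes $\E_M[\pi^{t-1}_S(\cdot,q)]$, a convex function by the induction hypothesis, whose expectation over mean-$p$ laws on $[a,b]$ is maximized by the endpoint split $\tfrac{b-p}{b-a}\delta_a+\tfrac{p-a}{b-a}\delta_b$; since $\hat g$ touches $g$ at $a$ and $b$ this split is feasible. Hence $\pi^t_S(p,q)$ equals, on each affine piece $[a,b]$ of $\hat g$, the linear interpolation of $\pi^{t-1}_S(a,q)$ and $\pi^{t-1}_S(b,q)$; equivalently $\pi^t_S(\cdot,q)$ is the convex function $\pi^{t-1}_S(\cdot,q)$ with its chords drawn across the affine pieces of $\hat g$. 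Such a modification stays convex: at any point, both the left and right one-sided slopes of $\pi^t_S(\cdot,q)$ are one-sided derivatives or secant slopes of $\pi^{t-1}_S(\cdot,q)$, taken over intervals lying weakly to the left and weakly to the right of the point respectively, and convexity of $\pi^{t-1}_S(\cdot,q)$ makes the former at most the latter. (In the concrete settings all $\pi^t_i$ stay continuous and piecewise linear with finitely many pieces, so the affine pieces of $\hat g$ form a finite partition of $[0,1]$ and $\pi^t_S(\cdot,q)$ is a piecewise-linear interpolant whose successive slopes are nondecreasing; I would carry this as an auxiliary invariant.)

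The main obstacle is this last paragraph: pinning down which mean-$p$ distributions are concave-hull-optimal for Bob, checking that the tie-break selects the endpoint split, and verifying that chording a convex function across a family of disjoint intervals preserves convexity. The remaining ingredients --- the base case, the ``voluntary communication is free'' reduction, and ``$\pi^t_B$ convex in $q$'' --- are short, and the even-$t$ step of the induction is identical up to the symmetric relabeling.
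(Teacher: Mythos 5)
Your proposal is correct and follows essentially the same route as the paper: a joint induction in which one direction (convexity of $\pi^t_B$ in $q$) is a short Jensen/max argument, and the other (convexity of $\pi^t_S$ in $p$) is handled by analyzing the geometry of the one-dimensional concave hull and the tie-break, showing the sender's optimal signal is the endpoint split on each affine piece and that the resulting piecewise-linear interpolant of a convex function stays convex. This is precisely the content of the paper's Lemma~\ref{lem:convex}; the only presentational difference is that you make the ``voluntary communication is automatically satisfied'' reduction explicit inside the induction, whereas the paper records it separately as a corollary.
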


The proof will use the following lemma:
\begin{lemma}
\label{lem:convex}
Consider two functions $f,g : [0,1]\rightarrow \mathbb{R}$, and assume $g$ is convex. Define $m: [0,1] \rightarrow \Delta([0,1])$ such that for each $p \in [0,1]$, $m(p)$ maximizes $\E_{q\sim d}[f(q)]$ over all $d \in \Delta([0,1])$ and $\mu(d) = p$. If there are multiple $d$'s satisfying the above, $m(p)$ breaks ties in favor of maximizing $\E_{q\sim d}[g(q)]$. Formally:
$$m(p) \in A(p) =  \arg\max_{d: d \in \Delta([0,1]),\mu(d) = p} \E_{q\sim d}[f(q)]$$
$$m(p) \in B(p) = \arg\max_{d: d \in A(p)} \E_{q\sim d}[g(q)]$$
% \begin{itemize}
%     \item  $m(p)$ maximizes $\E_{q\sim d}[f(q)]$ over all $d \in \Delta([0,1])$ and $\mu(d) = p$, i.e. $$m(p) \in A(p) =  \arg\max_{d: d \in \Delta([0,1]),\mu(d) = p} \E_{q\sim d}[f(q)]$$
%     \item If there are multiple $d$'s satisfy the above, $m(p)$ breaks tie in favor of maximizing $\E_{q\sim d}[g(q)]$, i.e.
%     \[
%     m(p) \in B(p) = \arg\max_{d: d \in A(p)} \E_{q\sim d}[g(q)].
%     \]
% \end{itemize}
Then $G(p) =\E_{q\sim m(p)}[g(q)]$ is convex for $p\in[0,1]$.
\end{lemma}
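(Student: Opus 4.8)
The plan is to recognize the quantity $F(p):=\max_{d\in\Delta([0,1]),\,\mu(d)=p}\E_{q\sim d}[f(q)]$ as the concave hull of $f$ on $[0,1]$, and then to read off $G$ from the geometry of this concave hull. (I assume throughout that the maxima defining $m(p)$ are attained, which holds in all our applications, where $f$ and $g$ are piecewise linear with finitely many pieces.) The first step is to record which distributions are $f$-optimal: since $f\le F$ pointwise and $F$ is concave, every $d$ satisfies $\E_d[f]\le\E_d[F]\le F(\mu(d))$, so $\E_d[f]=F(\mu(d))$ holds if and only if $\mathrm{supp}(d)\subseteq\{q:f(q)=F(q)\}$ and $\mathrm{supp}(d)\subseteq I_p$, where $I_p$ is the maximal interval on which $F$ is affine that contains $p=\mu(d)$ (just $\{p\}$ when $F$ is not affine in any neighborhood of $p$). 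Consequently
\[
G(p)=\max\bigl\{\E_{q\sim d}[g(q)]:\ \mu(d)=p,\ \mathrm{supp}(d)\subseteq I_p\cap\{f=F\}\bigr\}.
\]

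The key structural fact I would establish is that $F=f$ on the \emph{exposed set} $E:=[0,1]\setminus\bigcup_k(a_k,b_k)$, where $\{(a_k,b_k)\}$ are the disjoint interiors of the maximal intervals on which $F$ is affine. Indeed, suppose $p\in E$ but $F(p)>f(p)$. By Carath\'eodory in one dimension the optimum defining $F(p)$ is attained by a two-point distribution on some $q_1<p<q_2$ with $\lambda f(q_1)+(1-\lambda)f(q_2)=F(p)$; chaining $f(q_i)\le F(q_i)$ with concavity of $F$ forces $\lambda F(q_1)+(1-\lambda)F(q_2)=F(p)$, so $F$ is affine on $[q_1,q_2]$ and $p$ lies in the interior of a maximal affine piece, contradicting $p\in E$. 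In particular $F=f$ at $0$, at $1$, and at both endpoints of every maximal affine piece.

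With this, $G$ is pinned down. For $p\in E$, the constraints $\mu(d)=p$ and $\mathrm{supp}(d)\subseteq I_p$ force $d=\delta_p$ (either $I_p=\{p\}$, or $p$ is an endpoint of $I_p$, and a distribution supported in $I_p$ has mean equal to an endpoint only if it is the point mass there); hence $G(p)=g(p)$. For $p$ inside a maximal affine piece $(a,b)$ we have $a,b\in\{f=F\}$ by the structural fact, so $d^\star:=\tfrac{b-p}{b-a}\delta_a+\tfrac{p-a}{b-a}\delta_b$ is feasible, and for any feasible $d$ — necessarily supported in $[a,b]$ — convexity of $g$ gives $\E_d[g]\le\tfrac{b-p}{b-a}g(a)+\tfrac{p-a}{b-a}g(b)=\E_{d^\star}[g]$. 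Thus $G$ coincides with $g$ on $E$ and interpolates $g$ linearly across each gap $(a_k,b_k)$ between the endpoint values $g(a_k),g(b_k)$; in particular $G\ge g$ everywhere, which incidentally verifies the voluntary-communication direction.

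It remains to show this $G$ is convex, which I would do by exhibiting, at an arbitrary $p_0$, an affine $\ell$ with $\ell\le G$ on $[0,1]$ and $\ell(p_0)=G(p_0)$: a function admitting such a minorant at every point is the pointwise supremum of affine functions, hence convex. If $p_0\in E$, take $\ell$ to be a supporting line of the convex function $g$ at $p_0$; then $\ell\le g=G$ on $E$, and on each gap $[a_k,b_k]$ the chord of $g$ (which is $G$ there) dominates the chord of the affine function $\ell$, i.e.\ $\ell$ itself, since $g(a_k)\ge\ell(a_k)$ and $g(b_k)\ge\ell(b_k)$. If $p_0$ lies in a gap $(a,b)$, take $\ell$ to be the affine extension of the chord of $g$ on $[a,b]$: then $G=\ell$ on $[a,b]$, while outside $[a,b]$ a convex function lies above the extension of any of its chords, so $g\ge\ell$ there and hence $G\ge\ell$ on $E$ and on every other gap as well. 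I expect the main obstacle to be the structural fact $F=f$ on $E$ together with the bookkeeping of which distributions are $f$-optimal at boundary points of affine pieces; once those are in hand the convexity of $G$ is essentially mechanical. (For general non-piecewise-linear $g$ the endpoints $p_0\in\{0,1\}$ may need a limiting argument, since $g$ can have infinite one-sided slope there, but this does not arise in our setting.)
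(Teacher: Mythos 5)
Your proof is correct and takes essentially the same route as the paper: identify the breakpoints of the concave hull $f^*$, observe that the optimal tie-broken refinement interpolates between adjacent breakpoints, and conclude that $G$ is the piecewise-linear interpolation of $g$ at those breakpoints, hence convex. You fill in two steps the paper leaves implicit --- the structural fact that $f$ coincides with its concave hull on the set of breakpoints, and the verification that piecewise-linear interpolation of a convex function at a grid preserves convexity (which you handle cleanly via affine minorants) --- but the core argument is the same one.
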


\begin{proof}
%Let $f^*$ be the concave envelope of $f$, i.e. $f^*(x) = \inf \{h(x):h:[0,1]\rightarrow \mathbb{R}, h\text{ is concave}, h\geq f\}$. 
Similar to Section \ref{sec:pi_1}, let $f^*$ be the concave hull of $f$, or in other words,
$f^*(p) =\max_{d: d \in \Delta([0,1]),\mu(d) = p} \E_{q\sim d}[f(q)]$.
Let $0 = p_0 < p_1 < \cdots < p_n = 1$ be the points that partition $f^*$ into piecewise linear functions. Notice that the proof is using the fact that this partition is in the 1-dimensional space. As you will see in Example \ref{ex:notconvex}, the proof would not work if the partition is in higher dimensional space. %\renato{We need to define concave envelope.}
Define $\xi:[0,1] \rightarrow \Delta([0,1])$ to be the following:
\begin{itemize}
    \item If $p = p_i$ for some $i = 0,\ldots ,n$, $\xi(p)$ is a singleton distribution on $p_i$.
    \item If $p \in (p_i,p_{i+1})$ for some $i = 0,\ldots ,n-1$, $\xi(p)$ has probability density $(p_{i+1} - p) / (p_{i+1}-p_i)$ on $p_i$ and probability density $(p - p_i) / (p_{i+1}-p_i)$ on $p_{i+1}$.
\end{itemize}
It's easy to check that $\mu(\xi(p)) = p$. Now we want to show that $\xi(p) \in B(p)$ for $p \in [0,1]$.

\begin{itemize}
\item For $p = p_i$ for some $i = 0,\ldots ,n$, by the definition of the concave hull, we know $\xi(p)$ is the unique element in $A(p)$. Therefore, $\xi(p) \in B(p)$.

\item For $p \in (p_i,p_{i+1})$, by definition of the concave hull, we know that $\xi(p) \in A(p)$. Moreover, for any $d \in A(p)$, $d$ is supported on $[p_i,p_{i+1}]$. By the convexity of $g$, we know $\xi(p)$ maximizes $\E_{q\sim d}[g(q)]$ over all $d$ supported on $[p_i, p_{i+1}]$ and has $\mu(d) = p$. Therefore, $\xi(p) \in B(p)$.
\end{itemize}

Now we have $\xi(p) \in B(p)$. Notice that for any $d \in B(p)$, $\E_{q\sim d}[g(q)]$ are the same and $G(p)$ is uniquely determined. 
Therefore, we have $$G(p) = \E_{q\sim \xi(p)}[g(q)] = \frac{1}{p_{i+1}-p_i}(g(p_i) \cdot \left(p_{i+1}-p) + g(p_{i+1}) \cdot (p - p_i)\right) ~~~\forall p \in [p_i,p_{i+1}].$$

So $G$ is a piecewise linearization of $g$ and the convexity is preserved.
\end{proof}

\begin{proof}[Proof of Theorem \ref{thm:no_vol}]
We only prove the theorem for $\pi_B^t$, the argument is symmetric for $\pi^t_S$. We prove by induction. Suppose that $\pi^t_B(p,q)$ is convex in $q$, we want to show $\pi^{t+1}_B(p,q)$ is convex in $q$. 

If $t+1$ is a round by Bob sending the refinement, we have for any $p ,q,q_1,q_2 \in [0,1]$ with $q_1 + q_2 = 2q$,
\begin{align*}
\pi^{t+1}_B(p,q) &= \max_{d \in \Delta([0,1]), \mu(d)=p} \E_{r \sim d}[\pi^{t}_B (r,q)] \\
&\leq \frac{1}{2}\cdot \max_{d \in \Delta([0,1]), \mu(d)=p} \E_{r \sim d}[\pi^{t}_B (r,q_1) + \pi^{t}_B (r,q_2)] \\
&\leq \frac{1}{2}\cdot \left(\left(\max_{d \in \Delta([0,1]), \mu(d)=p} \E_{r \sim d}[\pi^{t}_B (r,q_1)]\right) +\left(\max_{d \in \Delta([0,1]), \mu(d)=p} \E_{r \sim d}[\pi^{t}_B (r,q_2)]\right) \right) \\
&= \frac{1}{2}\cdot \left( \pi^{t+1}_B(p,q_1) +\pi^{t+1}_B(p,q_2) \right).
\end{align*}
And this implies $\pi^{t+1}_B(p,q)$ is convex in $q$.

If $t+1$ is a round by Sally sending the refinement, we use Lemma \ref{lem:convex}, and set $f = \pi^t_S(p, \cdot), g = \pi^t_B(p,\cdot)$ for any $p \in [0,1]$. And we know $G = \pi^{t+1}_B(p,\cdot)$ and therefore $\pi^{t+1}_B(p,q)$ is convex in $q$.
\end{proof}

We show in the following example that Lemma \ref{lem:convex} does not hold if $f,g$ are in higher dimensional space (e.g. $f,g : [0,1]^2\rightarrow \mathbb{R}$). Therefore, the proof technique for Theorem \ref{thm:no_vol} would not work beyond binary types.

\begin{example}
\label{ex:notconvex}
Define  $f,g : [0,1]^2\rightarrow \mathbb{R}$ to be the following:
\begin{itemize}
    \item $f = 0$ except $f(0,1/2)=f(1,1/2)=1$, $f(1/2,3/4) = f(1/2,1/4)=0.9$.
    \item $g(p_1,p_2) = (p_1-1/2)^2 + (p_2-1/2)^2$. $g$ is convex.
\end{itemize}
Define $m: [0,1]^2 \rightarrow \Delta([0,1]^2)$ to satisfy the followings for each $p=(p_1,p_2) \in [0,1]^2$:
\begin{itemize}
    \item  $m(p)$ maximizes $\E_{q\sim d}[f(q)]$ over all $d \in \Delta([0,1]^2)$ and $\mu(d) = p$.
    \item If there are multiple $d$'s satisfy the above, $m(p)$ breaks tie in favor of maximizing $\E_{q\sim d}[g(q)]$.
\end{itemize}
We will show $G(p) =\E_{q\sim m(p)}[g(q)]$ is not convex by showing that
$G(1/2,1/2) > \frac{1}{2}(G(1/2,3/8) + G(1/2,5/8))$.

By the definition of $f$, we know $m(1/2,1/2)$ will be $(0,1/2)$ w.p. $1/2$ and $(1,1/2)$ w.p. $1/2$. Therefore $G(1/2,1/2) = \frac{1}{2} ( g(0,1/2) + g(1,1/2)) = 1/4$.

Again by the definition of $f$, we know $m(1/2,3/8)$ will be $(0,1/2)$ w.p. $1/4$, $(1,1/2)$ w.p. $1/4$ and $(1/2,3/4)$ w.p. $1/2$. Therefore $G(1/2,3/8) = g(0,1/2)/4 + g(1,1/2)/4 + g(1/2,3/4) / 2 = 1/16 + 1/16 + 1/32 = 5/32$. By symmetry, we have $G(1/2,5/8) = G(1/2,3/8) = 5/32$. 

Therefore $G(1/2,1/2) > \frac{1}{2}(G(1/2,3/8) + G(1/2,5/8))$, and $G$ is not convex.
\end{example}

\section{One Message per Player Is Not Enough}\label{sec:three_rounds}

If the buyer has a binary type space, we showed that $2$ rounds of communication are enough to achieve efficiency. We now show an example where strictly better efficiency can be achieved with more rounds. %\kangning{I guess technically we didn't show 3 rounds can improve efficiency.}three rounds: a message by Bob, followed by a message by Sally and another message by Bob.

Consider a setting where $\Theta_B = \{v_1, v_2, v_3\}$ and $\Theta_S = \{c_1, c_2\}$. With $c_1 < c_2 < v_1 < v_2 < v_3$. Our first step is a structural characterization of Bob's message in the first round. We will show that it is without loss of generality to consider that Bob will always use one of $12$ messages.

\begin{lemma}\label{lemma:twelve_distr}
Given state $(\D_B, \D_S)$ and an information refinement $M $ of $\D_B$ then there is an information refinement $M'$ of $\D_B$ such that:
$$\E_{\D'_B \sim M'} [\pi_i^0(\D_S, \D'_B)] \geq \E_{\D'_B \sim M} [\pi_i^0(\D_S, \D'_B)], \text{ for } i \in \{B,S\}$$
and the distributions $\D'_B$ in the support of $M'$ consists of one of the following $12$ possibilities:
\begin{enumerate}
    \item Distribution of support size $1$ ($3$ possibilities)
    \item Distribution of support size $2$ where Sally is indifferent between pricing at either when her cost is $c_1$. ($3$ possibilities)
    \item Distribution of support size $2$ where Sally is indifferent between pricing at either when her cost is $c_2$. ($3$ possibilities)
    \item Distribution of support size $3$ where Sally is indifferent between pricing at either when her cost is $c_1$.
    \item Distribution of support size $3$ where Sally is indifferent between pricing at either when her cost is $c_2$.
    \item Distribution of support size $3$ where Sally is indifferent between pricing at $v_1$ and $v_2$ when her cost is $c_1$ and is indifferent between pricing at $v_2$ and $v_3$ when her cost is $c_2$.
\end{enumerate}
\end{lemma}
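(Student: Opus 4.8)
The plan is to reduce the statement to a finite geometric fact about the way Sally's optimal pricing partitions $\Delta(\Theta_B)$, and then to enumerate the relevant vertices of that partition.

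First I would fix $\D_S$ and record the structure of $\pi_i^0(\D_S,\cdot)$. Since $\pi_i^0$ is affine in $\D_S$, the payoff vector $(\pi_B^0,\pi_S^0)$ is determined by $\D_B$ only through the pair of optimal prices $\rho_1(\D_B)=a_S^*(c_1;\D_B)$ and $\rho_2(\D_B)=a_S^*(c_2;\D_B)$, each in $\{v_1,v_2,v_3\}$; and for fixed prices both $\Pr_{\D_B}[\theta_B\ge v]$ and $\E_{\D_B}[(\theta_B-v)^+]$ are affine in $\D_B$, so $(\pi_B^0,\pi_S^0)$ is affine wherever $(\rho_1,\rho_2)$ is constant. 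Because the revenue $\Pr_{\D_B}[\theta_B\ge v](v-c)$ is single-crossing in $c$ between any two prices, the optimal price is monotone non-decreasing in the cost, so (breaking ties toward the lower price, as the equilibrium selection does) one always has $\rho_1(\D_B)\le\rho_2(\D_B)$. Hence $\Delta(\Theta_B)$ decomposes into at most six convex polytopal cells, one per admissible pair $(\rho_1,\rho_2)$, with $(\pi_B^0,\pi_S^0)$ affine on each.

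Next I would reduce to vertices of this decomposition. Given an arbitrary refinement $M$ of $\D_B$, split its mass by the cell containing the sampled posterior and, inside each cell, replace the conditional law by its barycenter and then rewrite that barycenter as a convex combination of the (finitely many) vertices of the closed cell. Affineness of $(\pi_B^0,\pi_S^0)$ on each cell makes this operation payoff-preserving for Sally and payoff-non-decreasing for Bob --- the only slack coming from the fact that, at a boundary vertex, the equilibrium tie-break lets Sally choose a price weakly preferred by Bob. The result is a refinement $M'$ of $\D_B$, supported on vertices of the decomposition, with $\E_{M'}\pi_i^0\ge\E_M\pi_i^0$ for $i\in\{B,S\}$. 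So it only remains to check that every such vertex is one of the twelve distributions listed.

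The hard part will be precisely that last check. A vertex is cut out by two of the bounding hyperplanes: two edges of $\Delta_3$ (giving $e_{v_1},e_{v_2},e_{v_3}$, item 1), one edge and one cost-indifference hyperplane (giving, on the edge with support $\{v_i,v_j\}$, the distribution making the corresponding cost indifferent between $v_i$ and $v_j$ --- items 2 and 3, three each), or two cost-indifference hyperplanes (the interior vertices). For the interior vertices I would argue by cases which indifferences can be simultaneously ``active'', i.e.\ can actually separate three distinct pricing regimes rather than merely cross inside one: two $c_1$-indifferences can only coincide at the single point where Sally at cost $c_1$ is indifferent among all of $v_1,v_2,v_3$ (item 4), symmetrically for $c_2$ (item 5), and the only mixed triple point is where cost $c_1$ is indifferent between $v_1,v_2$ and cost $c_2$ is indifferent between $v_2,v_3$ (item 6); every other mixed crossing is either between parallel hyperplanes, or forces the omitted price to be strictly optimal for one of the costs (so the crossing lies in the relative interior of a single cell and is not a vertex). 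That case analysis --- where short sign computations of the type $(c_2-c_1)(v_2-v_3)<0$, using $c_1<c_2<v_1<v_2<v_3$, do the real work --- together with the count $3+3+3+1+1+1=12$ finishes the proof; the only other point needing care is the treatment of tie-breaking on cell boundaries in the reduction step above, which is what makes the conclusion an inequality rather than an equality.
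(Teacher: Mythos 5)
Your proof is correct, but it takes a genuinely different route from the paper's. The paper proceeds by a direct, constructive decomposition organized around the support size of $\D'_B$: support size~$1$ is already done; for support size~$2$ it scales mass until one cost becomes indifferent or a support point vanishes; for support size~$3$ it splits according to whether Sally at cost $c_2$ prices at $v_1$, $v_2$, or $v_3$, in each case peeling off a triple-indifference (item 4 or 5) or mixed-indifference (item 6) distribution and reducing the remainder to the lower-support cases. This is concrete and easy to check, but it leaves the count of twelve looking a bit like an accident of the bookkeeping.

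Your argument explains the ``twelve'' conceptually: fix $\D_S$, observe that $(\pi^0_B,\pi^0_S)$ is piecewise affine on $\Delta(\Theta_B)$ with cells indexed by the pair of optimal prices $(\rho_1,\rho_2)\in\{v_1,v_2,v_3\}^2$; single-crossing of revenue in $(v,-c)$ together with the tie-break in Bob's favor gives $\rho_1\le\rho_2$, hence at most six cells; a Carath\'eodory-style replacement by cell barycenters and then by cell vertices is payoff-preserving for Sally (since $\pi^0_S$, being a max of affine functions, is continuous and affine on each closed cell) and weakly improving for Bob (the tie-break at boundaries can only help him); and the twelve listed distributions are exactly the vertices of this cell complex. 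Your handling of the tie-break and the vertex enumeration are both correct: the interior vertices are precisely the two triple-indifference points plus the one mixed crossing $c_1(v_1,v_2)\cap c_2(v_2,v_3)$, since the other mixed crossings either lie on parallel hyperplanes or, by single-crossing, force the omitted price to be optimal (and hence sit in the relative interior of a cell or a facet). The trade-off is that your proof is a little more work to make airtight --- one has to carry out the six mixed-pair checks and be careful about which hyperplanes are actually facet-defining --- while the paper's proof trades that geometric clarity for an explicit recipe per support size. Both are valid; yours is the more illuminating explanation of where the twelve distributions come from.
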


\begin{proof}
Given any distribution $\D'_B$ we will show how to decompose it into distributions like the ones in the statement of the lemma such that both Sally and Bob weakly improve their payoffs.

\noindent \emph{Case 1:} if $\D'_B$ has support size $1$ then it is already in the desired form. 

\noindent \emph{Case 2:} if $\D'_B$ has support $\{v_L, v_H\}$ and the types have probability $p_L$ and $p_H$ respectively, then assume Sally is not indifferent at any of her cost, then:
\begin{equation}\label{eq:diff_prices}
p_H (v_H - c_0) \neq (v_L - c_0) \qquad 
  p_H (v_H - c_1) \neq (v_L - c_1)
\end{equation}
Then define for each $\epsilon > 0$ define two distributions:
\begin{itemize}
    \item $\D''_B$ puts all the mass on the low type
    \item $\D'''_B$ puts mass $p_H / (1-\epsilon)$ on the high type and otherwise on the low type.
\end{itemize}
Now instead replace the message $\D'_B$ by message $\D''_B$  with probability $\epsilon$ and message $\D'''_B$  with probability $1-\epsilon$. As we keep increasing $\epsilon$ either one of the equations \eqref{eq:diff_prices} holds with equality in which case $\D'''_B$ is in the format of items $2$ and $3$ in the statement or $\epsilon$ reaches $1$ in which case $\D'''_B$ has support size $1$. In either case, the only change is that in cases where Sally wasn't selling before, she is now selling at price $v_L$. Hence Bob's utility remains the same and Sally's utility can only improve.

\noindent \emph{Case 3:} if $\D'_B$ has full support $\{v_1, v_2, v_3\}$ and Sally prices at $v_1$ when her cost is $c_2$: If Sally is indifferent at all three prices when her cost is $c_2$, $\D'_B$ is in the form of item 5 in the statement. Otherwise, we decompose $\D'_B$ into $\D''_B$ and $\D'''_B$. In $\D''_B$, Sally is indifferent at all three prices at cost $c_2$. The probability of sending $\D''_B$ is calculated so that one of the values is exhausted, so that $\D'''_B$ has support size of $1$ or $2$. In $\D'''_B$, Sally's optimal price must be $v_1$ at cost $c_2$, since she prefers $v_1$ in $\D'_B$ and she is indifferent in $\D''_B$. In $\D'_B$, $\D''_B$, and $\D'''_B$, Sally must price at $v_1$ when her cost is $c_1$, as the cost reduction also benefits the price of $v_1$ the most. Therefore, this decomposition does not change Sally's strategy, and we have reduced the value distribution to item 5 in the statement and case 1 or 2 in this proof.

\noindent \emph{Case 4:} if $\D'_B$ has full support $\{v_1, v_2, v_3\}$ and Sally prices at $v_3$ when her cost is $c_2$: We decompose $\D'_B$ into $\D''_B$ and $\D'''_B$, where in $\D''_B$, Sally is indifferent at all three prices at $c_1$, and $\D'''_B$ has support size of $1$ or $2$. When Sally has cost $c_1$, her optimal price for $\D'''_B$ is the same as that for $\D'_B$, and her optimal price for $\D''_B$ is $v_1$. When she has cost $c_2$, she sets price at $v_3$ for $\D'_B$. Therefore, this decomposition makes the price Sally sets weakly decrease, which weakly benefits Bob and the sum of Sally's and Bob's utilities. We have thus reduced the value distribution to item 4 in the statement and case 1 or 2 in this proof.

\noindent \emph{Case 5:} if $\D'_B$ has full support $\{v_1, v_2, v_3\}$ and Sally prices at $v_2$ when her cost is $c_2$: We decompose $\D'_B$ into $\D''_B$ and $\D'''_B$, where $\D''_B$ is in the form of item 6 in the statement and $\D'''_B$ has support size of at most $2$. When Sally's cost is $c_2$, the optimal price for $\D''_B$ and $\D'''_B$ are still $v_2$. When her cost is $c_1$, the optimal price for $\D'''_B$ must be at most $v_2$, since cost reduction benefits lower prices more, and thus the optimal price for $\D'''_B$ must be the same as that for $\D'_B$, since in $\D''_B$, Sally is indifferent between prices of $v_1$ and $v_2$. Therefore, similar to case 4, this decomposition makes Sally's price weakly decrease, and we have reduced the value distribution to item 6 in the statement and case 1 or 2 in this proof.
\end{proof}

A corollary of the previous lemma is that it is possible to compute Bob's best response by solving a linear program. Let $p^i = [p^i_1, p^1_2, p^i_3]$ for $i = 1, \ldots, 12$ be the probabilities associated with the distributions in the statement of Lemma \ref{lemma:twelve_distr}. We are given a pair $(\D_S, \D_B)$ where $\D_S$ is represented by the probability $q$ that Sally has the high type and $\D_B$ is represented by a vector of probabilities $p = [p_1, p_2, p_3]$. Now consider the following pair of programs:

$$\left.
\begin{aligned}
\pi^1_B(q,p) = & \max \sum_{i=1}^{12} w_i \pi_B^0(q, p^i)  \\
& \begin{aligned}
\text{ s.t.} \text{ } &  \sum_{i=1}^{12} w_i \pi_S^0(q, p^i) \geq 0 \\
& \sum_{i=1}^{12} w_i p^i = p \\
& \sum_{i=1}^{12} w_i = 1 \\
& w_i \geq 0, \ i=1, \ldots, 12
\end{aligned}
\end{aligned}
\quad \right. \left| \quad
\begin{aligned}
\pi^1_S(q,p) = & \max \sum_{i=1}^{12} w_i \pi_S^0(q, p^i)  \\
& \begin{aligned}
\text{ s.t.} \text{ } & \sum_{i=1}^{12} w_i \pi_B^0(q, p^i) = \pi_B^1(q, p) \\
& \sum_{i=1}^{12} w_i p^i = p \\
& \sum_{i=1}^{12} w_i = 1 \\
& w_i \geq 0, \ i=1, \ldots, 12
\end{aligned}
\end{aligned}
\right. $$

The first program computes the payoff after Bob's best response in the first round. The second program computes Sally's payoffs after Bob's response by finding the most beneficial tie-breaking for Sally.

\subsection{An Example requiring $3$ rounds of communication}

Using Lemma \ref{lemma:twelve_distr} and the linear programming formulation, we can now show an example where $3$ rounds of communication are required to achieve efficiency. Let Bob's types be $\Theta_B = \{3, 6, 12\}$ with probabilities $p=[1/3,1/3,1/3]$ and Sally's types be $\Theta_S = \{0, 2\}$ with $q = 1/5$ probability on the high type. 

Without any communication, Sally sets price at $6$ when her cost is $0$ and at $12$ when her cost is $2$, giving:
\[
\pi^0_S(q,p) = \frac{4}{5} \cdot 4 + \frac{1}{5} \cdot \frac{10}{3} = \frac{58}{15}, \quad \pi^0_B(q,p) = \frac{4}{5} \cdot 2 = \frac{8}{5}, \quad W^0(q,p) = \frac{82}{15} < W^*(q,p) = \frac{33}{5}
\]

Using the linear program above, we can compute the payoffs in round $1$ if for the original vector of probabilities $p$ for Bob and any probabiliy $q'$ of Sally having the high type: 
\[
\pi^1_S(q',p) =
\begin{cases}
    4 - \frac{2}{3} q'       & \quad \text{if } q' < \frac{1}{3}\\
    \frac{42}{9} - \frac{4}{3} q'  & \quad \text{if } \frac{1}{3} \leq q' < \frac{2}{3}\\
    \frac{16}{3} - 2q'  & \quad \text{if } q' \geq \frac{2}{3}\\
\end{cases}, \qquad 
\pi^1_B(q',p) =
\begin{cases}
    3 - 3q'       & \quad \text{if } q' < \frac{1}{3}\\
    \frac{7}{3} - q'  & \quad \text{if } \frac{1}{3} \leq q' < \frac{2}{3}\\
    \frac{5}{3}  & \quad \text{if } q' \geq \frac{2}{3}\\
\end{cases}
\]

In particular, for the original value of $q=1/5$ we have:
\[
\pi^1_S(q,p) = \frac{58}{15}, \quad \pi^1_B(q,p) = \frac{12}{5}, \quad W^1(q,p) = \frac{94}{15} < W^*(q,p) = \frac{33}{5}
\]

Sally's best response at $t = 2$ is to refine Bob's prior to $q = 0$ with probability 2/5 and $q = 1/3$ with probability 3/5, hence:
\[
\pi^2_i(1/5, p) = \frac{3}{5} \pi^1_i(1/3, p) + \frac{2}{5} \pi^1_i(0,p)
\]

Substituting the numerical values, we get:
\[
\pi^2_S(q,p) = \frac{3}{5} \cdot \frac{38}{9} + \frac{2}{5} \cdot 4 = \frac{62}{15}, \quad \pi^2_B(q,p) = \frac{3}{5} \cdot 2 + \frac{2}{5} \cdot 3 = \frac{12}{5}, \quad W^2(q,p) = \frac{98}{15} < W^*(q,p) = \frac{33}{5}
\]

By Theorem \ref{thm:main_trade_thm}, since $\pi_S^2(q,p) + \pi_B^2(q,p) < W^*(p,q)$ there must be some $t>2$ such that $\pi_S^t(q,p) + \pi_B^t(q,p) > \pi_S^2(q,p) + \pi_B^2(q,p)$. Hence the message complexity must be at least $3$.

\end{document}